\crefname{lemma}{Lemma}{Lemmas}
\Crefname{lemma}{Lemma}{Lemmas}
\crefname{theorem}{Theorem}{Theorems}
\Crefname{theorem}{Theorem}{Theorems}
\crefname{observation}{Observation}{Observations}
\Crefname{observation}{Observation}{Observations}
\crefname{definition}{Definition}{Definitions}
\Crefname{definition}{Definition}{Definitions}
\crefname{section}{Section}{Sections}
\Crefname{section}{Section}{Sections}
\crefname{figure}{Figure}{Figures}
\Crefname{figure}{Figure}{Figures}
\crefname{equation}{}{}
\Crefname{equation}{}{}
\renewcommand{\subset}{\subseteq}
\newcommand{\ceil}[1]{\left\lceil{#1}\right\rceil}
\newcommand{\floor}[1]{\left\lfloor{#1}\right\rfloor}
\newcommand{\abs}[1]{\left | #1 \right |}
\newcommand{\set}[1]{\left \{ #1 \right \}}
\newcommand{\Oh}{\mathcal{O}}
\newtheorem{fact}[theorem]{Fact}
\crefname{fact}{Fact}{Facts}
\DeclarePairedDelimiterX\Set[2]{\lbrace}{\rbrace}%
 { #1 \,\delimsize|\, #2 }
\def\FullBox{\hbox{\vrule width 8pt height 8pt depth 0pt}}
\def\qed{\ifmmode\qquad\FullBox\else{\unskip\nobreak\hfil
\penalty50\hskip1em\null\nobreak\hfil\FullBox
\parfillskip=0pt\finalhyphendemerits=0\endgraf}\fi}
\def\qedsketch{\ifmmode\Box\else{\unskip\nobreak\hfil
\penalty50\hskip1em\null\nobreak\hfil$\Box$
\parfillskip=0pt\finalhyphendemerits=0\endgraf}\fi}
\newcommand{\R}{{\mathbb R}}
\newcommand{\N}{{\mathbb{N}}}
\newcommand{\F}{{\mathbb F}}
\newcommand{\loglog}{{\mathop{\mathrm{loglog}}}}
\newcommand{\class}[1]{\mathbf{#1}}
\renewcommand{\P}{\class{P}}
\newcommand{\E}{\mathbb{E}}
\renewcommand{\Pr}{\mathbf{Pr}}
\newcommand\drop[1]{}
\definecolor{shade}{RGB}{235,235,235}
\title{Quicksort, Largest Bucket, and Min-Wise Hashing with Limited Independence}
\author{Mathias Bæk Tejs Knudsen \inst{1} \thanks{Research partly supported by Mikkel Thorup's
    Advanced Grant from the Danish Council for Independent Research
    under the Sapere Aude programme and the FNU project AlgoDisc - Discrete
    Mathematics, Algorithms, and Data Structures.}
    \and Morten Stöckel\inst{2} \thanks{This author is supported by the Danish National Research Foundation under the Sapere Aude program.}}
\institute{University of Copenhagen \\ \email{knudsen@di.ku.dk}
\and IT University of Copenhagen \\ \email{mstc@itu.dk}}
\date{}
\begin{document}
\setcounter{page}{0}
\maketitle

\begin{abstract}
Randomized algorithms and data structures are often analyzed under the assumption of access to a perfect source of randomness. The most fundamental metric used to measure how ``random'' a hash function or a random number generator is, is its \emph{independence}: a sequence of random variables is said to be $k$-independent if every variable is uniform and every size $k$ subset is independent.

In this paper we consider three classic algorithms under limited independence. Besides the theoretical interest in removing the unrealistic assumption of full independence, the work is motivated by lower independence being more practical.
We provide new bounds for randomized quicksort, min-wise hashing and largest bucket size under limited independence. Our results can be summarized as follows.
\begin{itemize}
	\item \emph{Randomized quicksort.} When pivot elements are computed using a $5$-independent hash function, Karloff and Raghavan, J.ACM'93 showed $\Oh ( n \log n)$ expected worst-case running time for a special version of quicksort. We improve upon this, showing that the same running time is achieved with only $4$-independence.

	\item \emph{Min-wise hashing.} For a set $A$, consider the probability of a particular element being mapped to the smallest hash value. It is known that $5$-independence implies the optimal probability $\Oh (1 /n)$.
	Broder et al., STOC'98 showed that $2$-independence implies it is $\Oh(1 / \sqrt{|A|})$. We show a matching lower bound as well as new tight bounds for $3$- and $4$-independent hash functions.

	\item \emph{Largest bucket.} We consider the case where $n$ balls are distributed to $n$ buckets using a $k$
	      -independent hash function and analyze the largest bucket size. Alon et. al, STOC'97 showed that there
	      exists a $2$-independent hash function
	      implying a bucket of size $\Omega ( n^{1/2})$. We generalize the bound,
	      providing a $k$-independent family of functions that imply size $\Omega ( n^{1/k})$.

\end{itemize}
\end{abstract}
\thispagestyle{empty}
\newpage
\setcounter{page}{1}
\section{Introduction}
A unifying metric of strength of hash functions and pseudorandom number generators is the \emph{independence}
of the function. We say that a sequence of random variables is $k$-independent if every random variable is
uniform and every size $k$ subset is independent. A question of theoretical interest is,
regarding
each algorithmic
application, \emph{how much independence is required?}. With the standard implementation of a random generator
or hash function via a $k-$ degree polynomial $k$ determines both the space used and the amount of randomness
provided.
A typical assumption when performing algorithmic analysis is to just assume full independence,
i.e., that
for input size $n$ then the hash function is $n$-independent. Besides the interest from a theoretic perspective,
the question of how much independence is required is in fact interesting from a practical perspective: hash
functions and generators with lower independence are as a rule of thumb faster in practice than those with
higher independence, hence if it is proven that the algorithmic application needs only $k$-independence to
work, then it can provide a speedup for an implementation to specifically pick a fast construction that
provides the required $k$-independence.
In this paper we
consider three fundamental applications of random hashing,
where we provide new bounds for limited independence.

{\bf Min-wise hashing.} We consider the commonly used scheme \emph{min-wise hashing}, which was
first introduced by Broder~\cite{Broder97onthe} and has several well-founded applications (see \Cref{kind:related}).
Here we study families of hash functions, where a function $h$ is picked uniformly at random from the family and
applied to all elements of a set $A$. For any element $x \in A$ we say that $h$ is min-wise independent if
$\Pr( \min h(A) = x ) = 1 / |A|$ and $\varepsilon$-min-wise if
$\Pr( \min h(A) = x ) = (1 + \varepsilon) / |A|$. For this problem we show new tight bounds for $k=2,3,4$
of
$\varepsilon = \Theta(\sqrt{n}), \Theta(\log n), \Theta(\log n)$ respectively and for $k=5$ it is folklore
that $O(1)$-min-wise ($\varepsilon = \Oh(1)$) can be achieved.
Since tight bounds for $k \ge 5$ exist (see \Cref{kind:related})
, our contribution closes the problem.

{\bf Randomized quicksort.} Next we consider a classic sorting algorithm presented in many randomized
algorithms books, e.g. already on page three of Motwani-Raghavan~\cite{Motwani:1995:RA:211390}.
The classic analysis of quicksort in Motwani-Raghavan uses crucially the probability of a particular
element being mapped to the smallest hash value out of all the elements: the expected worst-case running
time in this analysis is $\Oh(n \log n \cdot \Pr( \min h(A) = x ))$, where $A$ is the set of $n$ elements
to be sorted and $x \in A$. It follows directly from our new tight min-wise bounds that this analysis
cannot be improved further. A special version of randomized quicksort was showed by Karloff and Raghavan
to use expected worst-case time $\mathcal{O}(n \log n)$ when the pivot elements are chosen using a $5$-independent
hash function~\cite{Karloff:1988:RAP:62212.62242}. Our main result is a new general bound for the number
of comparisons performed under limited independence, which applies to several settings of quicksort, including
the setting of Karloff-Raghavan where we show the same running time using only $4$-independence.
Furthermore,
we
show that $k=2$ and $k=3$ can imply expected worst-case time
$\Omega\left (n \log^2 n \right)$.
An interesting
observation is that our new bounds for $k=4$ and $k=2$ shows that the classic analysis using min-wise hashing
is not tight, as we go below those bounds by a factor $\log n$ for $k=4$ and a factor $\sqrt{n} / \log n$ for $k=2$.
Our findings imply that a faster $4$-independent hash function can be used to guarantee the optimal running
time for randomized quicksort, which could potentially be of practical interest. Interestingly, our new bounds
on the number of performed comparisons under limited independence has implications on classic algorithms for
binary planar partitions and treaps. For binary planar partitions our results imply expected partition size
$\Oh ( n \log n)$ for the classic randomized algorithm for computing binary planar
partitions~\cite[Page 10]{Motwani:1995:RA:211390} under $4$-independence. For randomized
treaps~\cite[Page 201]{Motwani:1995:RA:211390} our new results imply $\Oh (\log n)$ worst-case depth
for $4$-independence. 

{\bf Larget bucket size.}
The last setting we consider is throwing $n$ balls into $n$ buckets using an $k$-independent hash function
and analyzing the size of the largest bucket. This can be
regarded
as a load balancing as the balls can represent
``tasks'' and the buckets represent processing units. Our main result is a family of $k$-independent hash
functions, which when used in this setting implies largest bucket size $\Omega(n^{1/k})$ with constant
probability. This result was previously known only for $k=2$ due to Alon et al.
\cite{Alon:1997:LHG:258533.258639} and our result is a generalization of their bound.
As an example of the usefulness of such bucket size bounds, consider the fundamental data structure;
the dictionary. Widely used algorithms books such as Cormen et al.~\cite{Cormen:2001:IA:580470} teaches
as the standard method to implement a dictionary to use an array with \emph{chaining}. Chaining here simply
means that for each key, corresponding to an entry in the array, we have a linked list (chain) and when a new
key-value pair is inserted, it is inserted at the end of the linked list. Clearly then, searching for a
particular key-value pair takes worst-case time proportional to the size of the largest chain. Hence, if one
is interested in worst-case lookup time guarantees then the expected largest bucket size formed by the keys
in the dictionary is of great importance.

\section{Relation to previous work}\label{kind:related}
We will briefly review related work on the topic of bounding the independence used as well as mention some of the popular hash function constructions.

The line of research that considers the amount of independence required is
substantial.
As examples,
Pagh et al.~\cite{Pagh:2007:LPC:1250790.1250839} showed that linear probing works with $5$-independence.
For the case of $\varepsilon$-min-wise hashing (``almost'' min-wise-hashing as used e.g. in
\cite{Indyk:1999:SAM:314500.314600}) Indyk showed that
$\mathcal{O}(\log \frac{1}{\varepsilon})$-independence is sufficient. For both of the above problems
Thorup and P\v{a}tra\c{s}cu~\cite{thorupind} showed optimality: They show existence of explicit families
of hash functions that for linear probing is $4$-independent leading to $\Omega(\log n)$ probes and
for $\varepsilon$-min-wise hashing is $\Omega(\log \frac{1}{\varepsilon})$-independent that implies
$(2\varepsilon )$-min-wise hashing. Additionally, they show that the popular multiply-shift hashing
scheme by Dietzfelbinger et al.~\cite{Dietzfelbinger199719} is not sufficient for linear probing and
$\varepsilon$-min-wise hashing. In terms of lower bounds, it was shown by
Broder et al.\cite{Broder98min-wiseindependent} that $k=2$ implies $\Pr( \min h(A) = x ) = 1 / \sqrt{|A|}$.
We provide a matching lower bound and new tight bounds for $k=3,4$. Additionally we review a folklore
$\Oh (1 / n)$ upper bound for $k=5$. 
Our lower bound proofs for min-wise hashing (see \Cref{tab:minwise}) for $k=3,4$ are surprisingly
similar to those of Thorup and P\v{a}tra\c{s}cu for
linear probing, in fact we use the same ``bad'' families of hash functions but with a different analysis.
Further the same families imply the same multiplicative factors relative to the optimal. Our new tight
bounds together with the bounds for $k \geq 5$ due to \cite{Indyk:1999:SAM:314500.314600,thorupind}
provide the full picture of how min-wise hashing behaves under limited independence. 

Randomized quicksort\cite{Motwani:1995:RA:211390} is well known to sort $n$ elements in expected time
$\Oh ( n \log n)$ under full independence. Given that pivot elements are picked by having $n$ random
variables with outcomes ${0,\ldots,n-1}$ and the outcome of variable $i$ in the sequence determines
the $i$th pivot element, then running time $\Oh (n \log n)$ has been shown\cite{Karloff:1988:RAP:62212.62242}
for $k=5$. We improve this and show $\Oh (n \log n)$ time for $k=4$ in the same setting. To the knowledge of the authors,
 it is still an open problem to analyze the version of randomized quicksort under
limited independence as presented by e.g. Motwani-Raghavan. The analysis of both the randomized binary
planar partition algorithm and the randomized treap in Motwani-Raghavan is done using the exact same argument
as for quicksort, namely using min-wise hashing which we show cannot be improved further and is not tight.
Our new quicksort bounds directly translates to improvements for these two applications. The randomized
binary planar partition algorithm is hence improved to be of expected size $\Oh (n \log^2 n)$ for $k=2$
and $\Oh(n \log n)$ for $k=4$, and the
expected
worst case depth of any node in a randomized treap is improved to
be $\Oh(\log^2 n)$ for $k=2$ and $\Oh( \log n)$ for $k=4$. 

As briefly mentioned earlier, our largest bucket size result is related to the generalization of
Alon et al., STOC'97, specifically \cite[Theorem 2]{Alon:1997:LHG:258533.258639}. They show that for a
(perfect square) field $\F$ then the class $\mathcal{H}$ of all linear transformations between $\F^2$ and
$\F$ has the property that when a hash function is picked uniformly at random from $h \in \mathcal{H}$
then an input set of size $n$ exists so that the
largest bucket has size at least $\sqrt{n}$.
In terms of upper bounds for largest bucket size, remember that a family $\mathcal{H}_u$ of hash functions
that map from $\mathcal{U}$ to $[n]$ is \emph{universal}~\cite{Carter1979143} if for a $h$ picked uniformly
from $\mathcal{H}_u$
it holds
\[ \forall x \neq y \in \mathcal{U}: \Pr(h(x) = h(y)) \leq 1/n\text{.}\]
Universal hash functions are known to have expected largest bucket size at most $\sqrt{n} + 1/2$, hence
essentially tight compared to the bound $\sqrt{n}$ lower bound of Alon et al. On the other end of the
spectrum, full independence is known to give expected largest bucket size $\Theta( \log n / \loglog n)$
due to a standard application of Chernoff bounds. This bound was proven to hold for
$\Theta( \log n / \loglog n)$-independence as well~\cite{Schmidt:1995:CBA:217737.217746}. In
\Cref{sec:load:upper} we additionally review a folklore upper bound
coinciding
with our new $\Omega(n^{1/k})$ lower bound.

Since the question of how much independence is needed from a practical perspective often could be rephrased ``how fast a hash function can I use and maintain algorithmic guarantees?'' we will briefly recap some used hash functions and pseudorandom generators. Functions with lower independence are typically faster in practice than functions with higher. The formalization of this is due to Siegel's lower bound \cite{siegel2004} where he shows that in the cell probe model, to achieve $k$-independence and number of probes $t < k$ then you need space $k (n/k)^{1/t}$. Since space usage scales with the independence $k$ then for high $k$ the effects of the memory hierarchy will mean that even if the time is held constant the practical time will scale with $k$ as cache effects etc. impact the running time.

The most used hashing scheme in practice is, as mentioned, the $2$-independent multiply-shift by Dietzfelbinger et al.~\cite{Dietzfelbinger199719}, which can be twice as fast~\cite{Thorup:2000:ESU:338219.338597} compared to even the simplest linear transformation $x \mapsto (ax+b) \mod p$. For $3$-independence we have due to (analysis by) Thorup and P\v{a}tra\c{s}cu the simple tabulation scheme~\cite{Patrascu:2011:PST:1993636.1993638}, which can be altered to give $5$-universality~\cite{thorup:tabulation}. For general $k$-independent hash functions the standard solution is degree $k-1$ polynomials, however especially for low $k$ these are known to run slowly, e.g. for $k=5$ then polynomial hashing is $5$ times slower than the tabulation based solution of \cite{thorup:tabulation}. Alternatively for high independence the double tabulation scheme by Thorup\cite{thorup2013}, which builds on Siegels result \cite{siegel2004}, can potentially be practical. On smaller universes Thorup gives explicit and practical parameters for $100$-independence. Also for high independence, the nearly optimal hash function of Christiani et al.\cite{christiani2015} should be practical.
For generating $k$-independent variables then Christiani and Pagh's constant time generator~\cite{6979004} performs well - their method is at an order of magnitude faster than evaluating a polynomial using fast fourier transform. We note that even though constant time generators as the above exist, the practical evaluation will actually scale with the independence, as the memory usage of the methods depend on the independence and so the effects of the underlying memory hierarchy comes to effect.

Finally, we would like to note that the paradigm of independence has its limitations in the sense that even though one can prove that $k$-independence by itself does not imply certain algorithmic guarantees, it can not be ruled out that $k$-independent hash functions exist that do. That is, lower bound proofs typically construct artificial families to provide counter examples, which in practice would not come into play. As an example, consider that linear probing needs $5$-independence to work as mentioned above but it has been proven to work with simple tabulation hashing~\cite{Patrascu:2011:PST:1993636.1993638}, which only has $3$-independence.

\section{Our results}
With regard to min-wise hashing, we close this version of the problem by providing new and tight bounds for $k=2,3,4$. We consider the following setting: let $A$ bet a set of size $n$ and let $\mathcal{H}$ be a $k$-independent family of hash functions. We examine the probability of any element $x \in A$ receiving the smallest hash value $h(x)$ out of all elements in $A$ when $h \in \mathcal{H}$ is picked uniformly at random.
For the case of $k=2,3,4$-independent families we provide new bounds as shown in \Cref{tab:minwise}, which provides a full understanding of the parameter space as a tight bound of $\Pr( \min h(A) = x ) = \mathcal{O}(1/n)$ is known for $k\geq 5$ due to Indyk\cite{Indyk:1999:SAM:314500.314600}.
\begin{table*}
 \centering
\begin{tabular}{l | c c c c}
\hline\hline  & $k=2$ & $k=3$ & $k=4$ & $k \geq 5$
\\ [0.5ex]
\hline
\textsc{Upper bound} & $\mathcal{O}(\sqrt{n} / n)$  & $\mathcal{O}((\log n)/n)^*$ & $\mathcal{O}((\log n)/n)^*$ & $\mathcal{O}(1/n)$\\
\textsc{Lower bound} & $\Omega(\sqrt{n} / n)^*$ & $\Omega((\log n)/n)^*$ & $\Omega((\log n)/n)^*$ & $\Omega(1/n)$\\
[0.5ex] \hline
\end{tabular}
\vspace{0.5cm}
\caption{Result overview for min-wise hashing. Results in this paper are marked with $^*$. For a set $A$ of size $n$ and an element $x \in A$ the cells correspond the probability $\Pr( \min h(A) = x )$ for a hash function $h$ picked uniformly at random from a $k$-independent family $\mathcal{H}$. }
\label{tab:minwise}
\end{table*}
\noindent We make note that our lower bound proofs, which work by providing explicit ``bad'' families of functions, share similarity with Thorup and P\v{a}tra\c{s}cu's~\cite[Table 1]{thorupind} proof of linear probing. In fact, our bad families of functions used are exactly the same, while the analysis is different. Surprisingly, the constructions imply the same factor relative to optimal as in linear probing, for every examined value of $k$.

Next, we consider randomized quicksort under limited independence. In the same setting as Karloff and Raghavan~\cite{Karloff:1988:RAP:62212.62242} our main result is that $4$-independence is sufficient for the optimal $\mathcal{O}( n \log n)$ expected worst-case running time. The setting is essentially that pivot elements are picked from a sequence of $k$-independent random variables that are pre-computed. Our results apply to a related setting of quicksort as well as to the analysis of binary planar partitions and randomized treaps. Our results are summarized in \Cref{tab:quicksort}.

\begin{table*}
 \centering
\begin{tabular}{l | c c c c}
\hline\hline  & $k=2$ & $k=3$ & $k=4$ & $k \geq 5$
\\ [0.5ex]
\hline
\textsc{Upper bound} & $\mathcal{O}( n \log^2 n)^*$  & $\mathcal{O}( n \log^2 n)^*$ & $\mathcal{O}(n \log n)^*$ & $\mathcal{O}(n \log n)$\\
\textsc{Lower bound} & $\Omega(n \log n)$ & $\Omega(n \log  n)$ & $\Omega(n \log n)$ & $\Omega(n \log n)$\\
[0.5ex] \hline
\end{tabular}
\vspace{0.5cm}
\caption{Result overview for randomized quicksort. Results in this paper are marked with $^*$. When our hash function $h$ is picked uniformly from $k$-independent family $\mathcal{H}$ then the cells in the table denote the expected running time to sort $n$ distinct elements. The $5$-independent upper bound is from Karloff-Raghavan\cite{Karloff:1988:RAP:62212.62242}.}
\label{tab:quicksort}
\end{table*}

Finally for the fundamental case of throwing $n$ balls into $n$ buckets. The main result is a simple $k$-independent family of functions which when used to throw the balls imply that with constant probability the largest bucket has $\Omega(n^{1/k})$ balls. We show the theorem below.

\begin{theorem}\label{thm:loadmain}
	Consider the setting where $n$ balls are distributed among $n$ buckets using a
	random hash function $h$.
	For $m \le n$ and any $k \in \mathbb{N}$ such that $k < n^{1/k}$ and $m^k \ge n$ 
	a $k$-independent distribution over hash functions exists such that the largest bucket size
	is $\Omega(m)$ with probability $\Omega \left ( \frac{n}{m^k} \right )$ when $h$ is chosen
	according to this distribution.
\end{theorem}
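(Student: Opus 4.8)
The plan is to exhibit, for each admissible $m$, an explicit (or semi-explicit) $k$-independent family on a carefully structured set of $n$ ``balls'', isolate a low-probability algebraic degeneracy of the drawn hash function, and show that on this degeneracy some bucket is forced to receive $\Omega(m)$ balls; the probability of the degeneracy must come out to $\Omega(n/m^k)$, which is exactly the ceiling permitted by the folklore upper bound reviewed in \Cref{sec:load:upper} (i.e.\ the Markov-type estimate on the number of monochromatic $k$-subsets that gives $\E[\text{max load}]=O(n^{1/k})$). So the construction has to be tight, not merely good.

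First I would fix parameters: take a prime power $q=\Theta(m)$ with $q^k\ge n$ — possible since $m^k\ge n$, after absorbing boundary cases into the $\Omega$ using the hypothesis $k<n^{1/k}$ — set $d=\lceil \log_q n\rceil$, identify the buckets with $\F_q^d$, and identify the $n$ balls with a set $S$ over $\F_q$ generalizing the ``grid'' used by Alon et al.\ for $k=2$. The crucial requirement on $S$ is that it be in sufficiently general position: for every $k$ of its (suitably lifted) points the relevant evaluation map is surjective, so that a random $\F_q$-affine (or low-degree) hash $h(x)=Mx+t$ restricted to $S$ is genuinely $k$-independent — and not merely universal, which is all a naive copy of Alon et al.\ delivers. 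Producing such an $S$ of the right size is an algebraic/coding-theoretic ingredient (an $\mathrm{MDS}$-type / high-distance configuration), and it is also where the dimension $d$ and the parameter $k$ have to be balanced against each other.

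Second, I would verify $k$-independence of the family: for $k$ distinct balls, the map $(M,t)\mapsto(h(x^{(l)}))_l$ is linear over $\F_q$ and is surjective precisely by the general-position property of $S$, giving the uniform joint distribution on buckets. Third, I would isolate the degeneracy: the linear part $M$ has corank $\ge t$ for a suitable small $t$. Counting singular matrices over $\F_q$ gives $\Pr[\mathrm{corank}(M)\ge t]=\Theta(q^{-t^2})$ (and $\Theta(q^{-1})$ for $t=1$), and on this event every fiber of $M$ is a coset of a subspace of dimension $\ge t$, so the fiber partition of $\F_q^d$ is coarse; because $S$ is spread out, some fiber must meet $S$ in $\Omega(m)$ points, producing a bucket of size $\Omega(m)$. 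Plugging $q=\Theta(m)$, $d=\lceil\log_q n\rceil$ and the right choice of $t$ into $q^{-t^2}$ should reproduce $\Omega(n/m^k)$; the whole-range statement then follows by reading the theorem off for each $m$ with its own $q,d,t$.

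The step I expect to be the main obstacle is reconciling (i) with (iii): making the degeneracy probable pushes the algebraic object to ``live in low dimension,'' whereas $k$-independence forces the $n$ balls into a high-dimensional general-position configuration, and naive affine families are $k$-independent on at most $O(q)=O(n^{1/k})$ points once $k\ge 3$. Threading this needle — choosing the family and the ball-set $S$ so that both hold simultaneously and the degeneracy probability is pushed all the way up to the information-theoretic ceiling $n/m^k$ — is the delicate heart of the proof; the rank-deficiency counting and the ``coarse fiber meets a spread set'' argument are then comparatively routine.
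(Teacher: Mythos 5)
Your plan hinges on a single global algebraic degeneracy (corank deficiency of one random affine map $h(x)=Mx+t$ on $\F_q^{d'}\to\F_q^d$), and the central difficulty you yourself flag is real and, in this setup, fatal rather than merely delicate. To make the pigeonhole step work (a fiber of $M$ catching $\Omega(m)$ of the $n$ balls) with degeneracy probability as large as $\Omega(n/m^k)=\Omega(q^{d-k})$, the corank event must have probability roughly $q^{-(d'-d+1)}$ or better, which forces the domain dimension $d'$ down to about $k-1$; but then $k$-independence requires every $k$ lifted ball-vectors in $\F_q^{d'+1}\cong\F_q^{k}$ to be linearly independent (an arc/MDS-type condition), which caps the number of balls at $O(q)=O(m)\ll n$. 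Conversely, taking $d'$ large enough to fit $n$ balls in general position (which is possible, e.g.\ by Gilbert--Varshamov-type arguments, so your ``at most $O(q)$ points'' is only true in low dimension) drives the corank probability down to a power of $1/q$ far below $n/m^k$; and in the extreme case $m=\Theta(n^{1/k})$ the theorem demands the large bucket with \emph{constant} probability, which no rank-deficiency event of a random matrix over $\F_q$ (probability $O(1/q)$) can deliver. There is also a granularity mismatch: a single global degeneracy has probability of the form $q^{-j}$ for integer $j$, whereas the target $n/m^k$ varies continuously, so no choice of $t,d,d'$ can track it across the whole admissible range. Finally, the claim that a coarse fiber ``must'' meet $S$ in $\Omega(m)$ points needs a spreadness property of $S$ relative to the random kernel that you never establish.

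The paper's proof avoids all of this with a much more elementary, explicitly block-local construction: pick a prime $p=\Theta(m)$, partition the buckets into $\Theta(n/p)$ blocks of size $p$, assign each ball to a block fully independently (proportionally to block size), and inside each block map the (at most $p$) balls via an \emph{independent} random degree-$(k-1)$ polynomial modulo $p$; overflow and leftover balls are placed uniformly. This is $k$-independent because degree-$(k-1)$ polynomials over $\F_p$ are, and the degeneracy is simply ``some block's polynomial is constant'': each block degenerates with probability $p^{-(k-1)}$, and because the $\Theta(n/p)$ blocks are independent the union boosts this to $1-(1-p^{-(k-1)})^{\Theta(n/p)}=\Omega(n/m^k)$, while a Chernoff/Markov argument shows a constant fraction of blocks receive $\ge p/2$ balls, so a constant polynomial yields a bucket of size $\Omega(m)$. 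That per-block independence, which multiplies many small local degeneracy probabilities up to exactly the $n/m^k$ ceiling, is precisely the mechanism your single global map lacks; if you want to salvage your route, you would need to build an analogous product structure into the family rather than rely on one matrix's rank defect.
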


An implication of \Cref{thm:loadmain} is that we now have the full understanding of the parameter space for this problem, as it was well known that independence $k = \mathcal{O}(\log n / \loglog n)$ implied $\Theta( \log n / \loglog n)$ balls in the largest bucket. We summarize with the corollary below.
\begin{corollary} \label{cor:full}
	Consider the setting where $n$ balls are distributed among $n$ buckets using a random hash function $h$.
	Given an integer $k$ a distribution over hash functions exists such that if $h$ is chosen according
	to this distribution then with $L$ being the size of the largest bucket
	\begin{enumerate}[label=(\alph*)]
	\item if $k \leq n^{1/k}$ then $L = \Omega \left( n^{1/k} \right)$ with probability
		$\Omega(1)$.
	\item if $k > n^{1/k}$ then $L = \Omega \left( \log n / \log \log n \right )$ with probability
		$\Omega(1)$.
	\end{enumerate}
\end{corollary}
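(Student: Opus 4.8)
The plan is to reduce \Cref{cor:full} to two ingredients with no overlap in difficulty: \Cref{thm:loadmain} covers the low-independence regime, while the classical second-moment analysis of the fully-random balls-in-bins process covers the high-independence regime, with the threshold $k \approx n^{1/k}$ handled by the latter. In the case $k < n^{1/k}$ (part (a) away from its boundary), set $m = \ceil{n^{1/k}}$. Then $m \le n$, $m^k \ge n$, and the hypothesis $k < n^{1/k}$ of \Cref{thm:loadmain} holds, so the theorem produces a $k$-independent distribution with largest bucket $\Omega(m) = \Omega(n^{1/k})$ with probability $\Omega(n/m^k)$. To see this probability is $\Omega(1)$, note $m < n^{1/k}+1$, hence $m^k < n\bigl(1 + n^{-1/k}\bigr)^k$, and since $k < n^{1/k}$ is exactly $n^{-1/k} < 1/k$, this is at most $n(1+1/k)^k < en$; thus $n/m^k > 1/e$.

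In the case $k \ge n^{1/k}$ (part (b), and also the boundary $k = n^{1/k}$ of part (a)), first observe that $k \ge n^{1/k}$ means $k\log k \ge \log n$, which forces $k = \Omega(\log n/\log\log n)$: if $k \le \log n$ then $\log k \le \log\log n$ so $k \ge \log n/\log k \ge \log n/\log\log n$, and if $k > \log n$ this is immediate. It therefore suffices to show that some $k$-independent distribution — e.g.\ degree-$(k-1)$ polynomial hashing — puts $\Omega(\log n/\log\log n)$ balls in the largest bucket with probability $\Omega(1)$; this is the known bound of \cite{Schmidt:1995:CBA:217737.217746}, and it also settles the boundary $k = n^{1/k}$ of part (a), because there $n^{1/k} = k = \Theta(\log n/\log\log n)$. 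I would give the self-contained second-moment proof: with $X_j$ the load of bucket $j$, with $t$ a sufficiently small constant times $\log n/\log\log n$ (so that $2t \le k$, using $k = \Omega(\log n/\log\log n)$), and with $Z = \sum_{j} \binom{X_j}{t}$, $t$-wise independence gives $\E[Z] = n\binom{n}{t}n^{-t}$, which tends to infinity; $2t$-wise independence gives that the covariances $\operatorname{Cov}\bigl(\binom{X_j}{t},\binom{X_{j'}}{t}\bigr)$ for $j\ne j'$ are nonpositive (two different buckets cannot share a ball) while the diagonal terms sum to $n\cdot 2^{O(t)} = o(\E[Z]^2)$; Chebyshev then yields $\Pr[Z \ge 1] = \Omega(1)$, and $Z \ge 1$ forces a bucket with at least $t = \Omega(\log n/\log\log n)$ balls.

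There is no new idea required, only two places where the estimates must be lined up. The first is confirming that the success probability $\Omega(n/m^k)$ in the first case really is bounded below by a constant throughout the regime $k < n^{1/k}$ — this is precisely where the strict inequality is used, and it is why the boundary $k = n^{1/k}$ must instead be routed through the high-independence argument. The second, and more delicate, is choosing the constant in $t = \Theta(\log n/\log\log n)$ so that three conditions hold simultaneously: $\E[Z] \to \infty$, $\operatorname{Var}[Z] = o(\E[Z]^2)$, and $2t \le k$; a single small constant (below $1/2$ works for the variance bound, with room to spare for $2t \le k$) suffices, but verifying all three together is the one computation that needs care. Everything else is a direct substitution into \Cref{thm:loadmain} or a textbook estimate.
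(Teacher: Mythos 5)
Your proposal is correct and uses the paper's own decomposition: part (a) is an instantiation of \Cref{thm:loadmain}, and part (b) reduces to the classical $\Omega(\log n/\log\log n)$ max-load bound after observing that $k > n^{1/k}$ forces $k = \Omega(\log n/\log\log n)$. You differ in two ways that are worth noting. First, you are more careful at the threshold: \Cref{thm:loadmain} assumes the strict inequality $k < n^{1/k}$, while part (a) of the corollary allows $k = n^{1/k}$; the paper simply says part (a) ``follows directly'' from the theorem, whereas you route the boundary case through the high-independence argument, using that $n^{1/k} = k = \Theta(\log n/\log\log n)$ there — a small but genuine patch. You also make explicit the instantiation $m = \lceil n^{1/k}\rceil$ and the check $n/m^k > 1/e$ via $(1+n^{-1/k})^k < (1+1/k)^k < e$, which the paper leaves implicit. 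Second, for part (b) the paper just cites \cite{Schmidt:1995:CBA:217737.217746}, whereas you give a self-contained second-moment argument with $Z = \sum_j \binom{X_j}{t}$ and $t = c\log n/\log\log n$: your key observations (expectation computed exactly from $t$-wise independence, nonpositive cross-bucket covariances because distinct buckets cannot share a ball, diagonal contribution $n\cdot 2^{O(t)} = n^{1+o(1)} = o(\E[Z]^2)$ for $c < 1/2$, and $2t \le k$ since $k \ge \log n/\log\log n$) are all correct, so your version trades a citation for about a page of elementary computation and makes the corollary self-contained.
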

We note that the result of \Cref{thm:loadmain} is not quite the generalization of the lower bound of Alon et al. since they show $\Omega(n^{1/2})$ largest bucket size for any linear transformation while our result provides an explicit worst-case $k$-independent scheme to achieve largest bucket size $\Omega( n^{1/k})$. However, as is evident from the proof in the next section, our scheme is not that artificial: In fact it is ``nearly'' standard polynomial hashing, providing hope that the true generalization of Alon et al. can be shown.

\section{Preliminaries}
We will introduce some notation and fundamentals used in the paper. For an integer $n$ we let $[n]$ denote $\{0, \ldots, n-1\}$. For an event $E$ we let $[E]$ be the variable that is $1$ if $E$ occurs and $0$ otherwise. Unless explicitly stated otherwise, the $\log n$ refers to the base $2$ logarithm of $n$. For a real number $x$ and a non-negative integer $k$ we define $x^{\underline{k}}$ as $x(x-1)\ldots(x-(k-1))$.

The paper is about application bounds when the independence of the random variables used is limited. We define independence of a hash function formally below. 
\begin{definition}\label{def:hashind}
Let $h: \mathcal{U} \mapsto V$ be a random hash function, $k \in \N$ and let $u_1, \ldots, u_k$ be any distinct $k$ elements from $\mathcal{U}$ and $v_1, \ldots, v_k$ be any $k$ elements from $V$.\\
Then $h$ is $k$-independent if it holds that
\[ \Pr \left( h(u_1) = v_1 \wedge \ldots \land h(u_k) = v_k \right) = \frac{1}{|V|^k}. \]
\end{definition}
Note that an equivalent definition for a sequence of random variables hold: they are $k$-independent if any element
is uniformly distributed and every $k$-tuple of them is independent.

\section{Min-wise hashing}

In this section we show the bounds that can be seen in \Cref{tab:minwise}.
As mentioned earlier, there is a close relationship between the worst case query time of an element in
linear probing and min-wise hashing when analysed under the assumption of hash functions with limited independence.
Intuitively, long query time for linear probing is caused by many hash values being ``close'' to the hash value
of the query element. On the other hand a hash value is likely to be the minimum if it is ``far away'' from the
other hash values. So intuitvely, min-wise hashing and linear probing are related by the fact that
good guarantees require a ``sharp'' concentration on how close to the hash value of the query element the
other hash values are.

\subsection{Upper bounds}
We show the following theorem which results in the upper bounds shown in \Cref{tab:minwise}. Note that the bound for $4$-independence follows trivially from the bound for $3$-independence and that the $5$-independence bound is folklore but included for completeness.
\begin{theorem}
Let $X = \set{x_0,x_1,\ldots,x_n}$ and $h : X \to (0,1)$ be a hash function. If $h$ is $3$-independent
\[
	\Pr \left ( h(x_0) < \min_{i \in \set{1,\ldots,n}} \set{h(x_i)} \right )
	=
	\Oh \left ( \frac{\log (n+1)}{n+1} \right )
\]
If $h$ is $5$-independent
\[
	\Pr \left ( h(x_0) < \min_{i \in \set{1,\ldots,n}} \set{h(x_i)} \right )
	=
	\Oh \left ( \frac{1}{n+1} \right )
\]
\end{theorem}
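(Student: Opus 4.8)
The plan is to condition on the value of $h(x_0)$ and reduce the statement to a concentration bound for a sum of $(k-1)$-wise independent indicator variables, then integrate. Fix $p \in (0,1)$ and, for $i \in \set{1,\ldots,n}$, put $Y_i = [h(x_i) \le p]$. Any $k-1$ of the variables $h(x_1),\ldots,h(x_n)$ together with $h(x_0)$ form a set of size $k$, so (the natural analogue of \Cref{def:hashind} for the range $(0,1)$) $k$-independence of $h$ implies that conditioned on $h(x_0)=p$ the variables $h(x_1),\ldots,h(x_n)$ are still $(k-1)$-wise independent and uniform on $(0,1)$; hence the $Y_i$ are $(k-1)$-wise independent Bernoulli variables with mean $p$. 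Writing $S = \sum_{i=1}^n Y_i$ and $\mu = \E[S] = np$, and using that $h(x_0)$ is uniform on $(0,1)$ and ties happen with probability $0$,
\[
	\Pr \left ( h(x_0) < \min_{i \in \set{1,\ldots,n}} h(x_i) \right ) = \int_0^1 \Pr\!\left( S = 0 \cond h(x_0) = p \right)\, dp ,
\]
so it suffices to bound $\Pr(S=0)$ for such a sum and integrate the resulting bound (capped by $1$) over $p$.

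\textbf{The $3$-independent case.} Here $2$-wise independence of the $Y_i$ gives $\Var(S) = \sum_i p(1-p) \le np = \mu$, so Chebyshev yields $\Pr(S=0) \le \Pr(|S-\mu| \ge \mu) \le \Var(S)/\mu^2 \le 1/(np)$. Combining with $\Pr(S=0)\le 1$,
\[
	\Pr \left ( h(x_0) < \min_{i} h(x_i) \right ) \le \int_0^1 \min\set{1,\; \tfrac{1}{np}}\, dp = \frac1n + \int_{1/n}^1 \frac{dp}{np} = \frac{1+\ln n}{n} = \Oh\!\left( \frac{\log(n+1)}{n+1} \right).
\]

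\textbf{The $5$-independent case.} Now the $Y_i$ are $4$-wise independent, which is exactly enough to evaluate the fourth central moment as if they were fully independent: with $Z_i = Y_i - p$, every term of $\E[(\sum_i Z_i)^4]$ in which some index has multiplicity $1$ vanishes, leaving $\E[(S-\mu)^4] = \sum_i \E[Z_i^4] + 3\sum_{i\ne j}\E[Z_i^2]\E[Z_j^2] \le \mu + 3\mu^2$, using $|Z_i|\le 1$ (so $\E[Z_i^4]\le \E[Z_i^2]$) and $\sum_i \E[Z_i^2] = np(1-p) \le \mu$. Markov applied to $(S-\mu)^4$ gives
\[
	\Pr(S=0) \le \Pr\!\left( (S-\mu)^4 \ge \mu^4 \right) \le \frac{\mu + 3\mu^2}{\mu^4} = \frac{1}{(np)^3} + \frac{3}{(np)^2}.
\]
Capping by $1$ and integrating, each term $c/(np)^t$ with $t\in\set{2,3}$ contributes $\Oh(1/n)$: the integrand equals $1$ for $p \le p_0 := c^{1/t}/n$, contributing $\Oh(1/n)$, and $\int_{p_0}^1 c/(np)^t\,dp = \Oh\!\left(1/(n^t p_0^{\,t-1})\right) = \Oh(1/n)$. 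Hence the probability is $\Oh(1/(n+1))$.

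\textbf{Main obstacle.} The argument is essentially routine; the only points needing care are (i) justifying that conditioning on a single value of the continuous variable $h(x_0)$ preserves $(k-1)$-wise independence and uniformity of $h(x_1),\ldots,h(x_n)$, and (ii) the bookkeeping in the fourth-moment expansion confirming that $4$-wise independence kills all cross terms involving an index of multiplicity $1$. Everything else is a split-the-integral computation.
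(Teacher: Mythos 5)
Your proposal is correct and follows essentially the same route as the paper: condition on $h(x_0)=\alpha$, bound $\Pr(Z=0)$ by a second-moment (Chebyshev) bound for $3$-independence and a fourth-moment bound for $5$-independence, then integrate over $\alpha$ with a split at $1/n$. The only cosmetic difference is that you work out the fourth central moment of the $(k-1)$-wise independent indicators explicitly, whereas the paper invokes its stated moment fact (\Cref{fact:kthmoment}) for the same bound.
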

\begin{proof}
For notational convenience let $E$ denote the event
$\left ( h(x_0) < \min_{i \in \set{1,\ldots,n}} \set{h(x_i)} \right )$.
First assume that $h$ is $3$-independent. Fix $h(x_0) = \alpha \in (0,1)$. Then $h$ is $2$-independent
on the remaining keys. Let $Z = \sum_{i = 1}^n \left [h(x_1) \le \alpha \right ]$. Then
under the assumption $h(x_0) = \alpha$:
\[
	\Pr \left ( E \mid h(x_0) = \alpha \right )
	=
	\Pr \left ( Z = 0  \mid h(x_0) = \alpha \right )
	\le
	\Pr \left ( \abs{Z - \E Z} \ge \E Z  \mid h(x_0) = \alpha  \right )
\]
Now since $h$ is $2$-independent on the remaining keys we see that $\Pr \left ( E \mid h(x_0) = \alpha \right )$
is upper bounded by (using \Cref{fact:kthmoment}):
\begin{align}
	\Pr \left ( \abs{Z - \E Z} \ge \E Z  \mid h(x_0) = \alpha  \right )
	&\le
	\frac{\E \left ( \left ( Z - \E Z \right )^2 \right )}{\left (\E Z \right )^2}
	=
	\Oh \left (
		\frac{1}{\E Z}
	\right )\notag\\
	&=
	\Oh \left (
		\frac{1}{n\alpha}\label{eq:proofForK3}
	\right )
\end{align}
Hence:
\begin{align}
	\label{eq:conclusionK3}
		\Pr \left ( E \mid h(x_0) = \alpha \right )
	&
		=
		\int_0^1
			\Pr \left ( E \mid h(x_0) = \alpha \right )
			d\alpha
	\notag
	\\
	&
		\le
		\frac{1}{n} +
		\int_{1/n}^1
		\Oh \left (
			\frac{1}{n\alpha}
		\right )
		=
		\Oh \left ( \frac{\log (n+1)}{n+1} \right )
\end{align}

This proves the first part of the theorem. Now assume that $h$ is $5$-independent and define $Z$
in the same way as before. In the same manner as we established the upper bound for
$\Pr \left ( E \mid h(x_0) = \alpha \right )$ in \Cref{eq:proofForK3} we see that it is now upper bounded by
(using \Cref{fact:kthmoment}):
\begin{align*}
	\Pr \left ( \abs{Z - \E Z} \ge \E Z  \mid h(x_0) = \alpha  \right )
	&\le
	\frac{\E \left ( \left ( Z - \E Z \right )^4 \right )}{\left (\E Z \right )^4}\\
	&=
	\Oh \left (
		\frac{1}{\left ( \E Z \right )^2}
	\right )
	=
	\Oh \left (
		\frac{1}{\left (n\alpha \right )^2}
	\right )
\end{align*}
In the same manner as in \Cref{eq:conclusionK3} we now see that
\[
	\Pr \left ( E \right )
	=
	\int_0^1
		\Pr \left ( E \mid h(x_0) = \alpha \right )
		d\alpha
	\le
	\frac{1}{n} +
	\int_{1/n}^1
	\Oh \left (
		\frac{1}{(n\alpha)^2}
	\right )
	=
	\Oh \left ( \frac{1}{n+1} \right )
\]
\qed
\end{proof}

\subsection{Lower bounds}
We first show the $k=4$ lower bound seen in \Cref{tab:minwise}. As mentioned earlier, the argument follows from the same ``bad'' distrubition as Thorup and P\v{a}tra\c{s}cu\cite{thorupind}, but with a different analysis.
\begin{theorem}
	For any key set $X = \set{x_0,x_1,\ldots,x_n}$ there exists a random hash function
	$h : X \to (0,1)$ that is $4$-independent such that
	\begin{align}
		\label{eq:lbminwise4}
		\Pr \left (
			h(x_0) < \min \set{h(x_1), \ldots, h(x_n)}
		\right )
		=
		\Omega \left ( \frac{\log (n+1)}{n+1} \right )
	\end{align}
\end{theorem}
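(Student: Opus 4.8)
Here $\Pr$ denotes $\mathbf{Pr}$ and $\E$ denotes $\mathbb E$, as in the paper's macros.

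The plan is to exhibit the ``bad'' family explicitly, reusing the very $4$-independent family $\mathcal H$ that Thorup and P\v{a}tra\c{s}cu use for their linear-probing lower bound, and to give a fresh analysis tailored to min-wise hashing (their analysis targets the linear-probing cost, ours targets $\Pr(h(x_0)=\min)$). I would begin by restating that family in full. The relevant structural feature is that it is \emph{hierarchical}: the internal randomness of $h$ carries a ``level'' parameter living over $\Theta(\log n)$ dyadic scales, and conditioned on the level the values $h(x_1),\dots,h(x_n)$ remain uniform and $3$-wise independent but are correlated so that their joint minimum is heavy-tailed, while $h(x_0)$ stays uniform on $(0,1)$ and (almost) independent of the rest. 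I would then recall — citing Thorup--P\v{a}tra\c{s}cu, or re-deriving in a line — that $\mathcal H$ is $4$-independent. The point worth stressing is that each $h(x_i)$ is \emph{marginally uniform} on $(0,1)$: this is exactly the constraint that kills the naive ``squeeze all values into a short top interval'' idea (which would violate even pairwise independence for $n$ keys), and it is the reason the construction needs the dyadic structure rather than a single correlated block.

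Next comes the analysis. Since $\mathcal H$ is $4$-independent, $h(x_0)$ is uniform on $(0,1)$ and, after conditioning on $h(x_0)=\alpha$, the remaining values are $3$-wise independent and uniform; setting $Z=\sum_{i=1}^n [\,h(x_i)<\alpha\,]$ we get $\E[Z\mid h(x_0)=\alpha]=n\alpha$ and
\[
\Pr\!\left( h(x_0) < \min_{i} h(x_i) \right) \;=\; \int_0^1 \Pr\!\left( Z = 0 \,\big|\, h(x_0) = \alpha \right)\, d\alpha .
\]
The heart of the argument is to show $\Pr\!\left( Z = 0 \mid h(x_0) = \alpha \right) = \Omega\!\left( \tfrac{1}{n\alpha} \right)$ for all $\alpha\in(1/n,1)$. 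Note the matching \emph{upper} bound $O(1/(n\alpha))$ is just the second-moment (Chebyshev/Cauchy--Schwarz) estimate used in the upper-bound proof and holds for every $3$-wise-independent family; the specific family $\mathcal H$ is doing real work precisely in making that inequality essentially \emph{tight}, and tight at every scale $\alpha$ at once — for a generic $3$-wise-independent family one cannot lower-bound $\Pr(Z=0)$ at all. Concretely, for $\alpha$ in the dyadic band around $2^{-j}$ I would isolate the level-$j$ part of the construction and show that on it $h(x_1),\dots,h(x_n)$ avoid $(0,\alpha)$ with probability $\Omega(1/(n\alpha))$ (this is where the ``design-like'' regularity of the family — its load near any threshold being spread as evenly as $3$-wise independence permits — is invoked). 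Integrating over the $\Theta(\log n)$ bands, $\int_{1/n}^1 \Omega\!\left(\tfrac{1}{n\alpha}\right)\,d\alpha = \Omega\!\left(\tfrac{\log(n+1)}{n+1}\right)$, which is \eqref{eq:lbminwise4}. (The matching upper bound in \Cref{tab:minwise} needs no separate argument: it is already covered by the $3$-independent case, since $4$-independence implies $3$-independence.)

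The main obstacle is the middle step: verifying that the (somewhat intricate) family is $4$-independent and, more importantly, that it realizes the extremal $\Omega(1/(n\alpha))$ behavior \emph{simultaneously for all} dyadic scales $\alpha$ rather than for a single one — i.e.\ checking the regularity of the hierarchical construction level by level. A secondary bookkeeping point is making the $\Theta(\log n)$ per-scale contributions genuinely additive: one has to handle the conditioning on the level parameter and the overlap between consecutive dyadic bands so that the integral really accumulates a $\log n$ factor and not a constant. This is exactly why, although the family is borrowed verbatim from Thorup--P\v{a}tra\c{s}cu, the analysis is not theirs.
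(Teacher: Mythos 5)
You have chosen the right family (the Thorup--P\v{a}tra\c{s}cu linear-probing construction) and the right high-level goal, but the core of your plan rests on a claim you neither prove nor can expect to hold for that family as it stands: that $\Pr(Z=0\mid h(x_0)=\alpha)=\Omega(1/(n\alpha))$ at essentially every scale $\alpha$. What the TP construction actually guarantees is \emph{local} structure: for each dyadic level $\ell$ in a restricted range, with probability $\Theta(2^{\ell}/n)$ there is an interval $I$ of size $\Theta(2^{-\ell})$ \emph{containing} $h(x_0)$ that is empty of (or overfull with) other keys. The event $Z=0$, by contrast, is the \emph{global} event that the entire prefix $(0,\alpha)$ is free of the other $n$ keys; outside the special interval the construction distributes keys in a balanced way, so for $\alpha$ much larger than the special interval's length the prefix is occupied with probability close to $1$. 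Hence the raw family does not realize the per-scale lower bound your integral needs, and the "main obstacle" you flag is not a bookkeeping issue but the place where the argument, as proposed, fails.

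The paper's proof supplies the two ingredients your sketch is missing. First, it does not condition on $h(x_0)=\alpha$ at all; instead it bounds the expected \emph{gap} $\E(h(x_0)-y)$, where $y$ is the largest hash value below $h(x_0)$ (and $0$ if none), using exactly the per-level "empty interval around $h(x_0)$" events: summing $\Theta(2^{\ell}/n)\cdot\Theta(2^{-\ell})$ over the $\Theta(\log n)$ admissible levels gives $\E(h(x_0)-y)=\Omega(\log n/n)$. Second, and crucially, it then passes to $h'(x)=(h(x)-z)\bmod 1$ with $z$ uniform on $(0,1)$; this cyclic shift preserves $4$-independence, and for fixed $h$ the rotated $h'(x_0)$ is the minimum whenever $z$ lands in $(y,h(x_0))$, so $\Pr(h'(x_0)<\min_i h'(x_i))\ge\E(h(x_0)-y)$. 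It is this rotation that converts local emptiness just below $h(x_0)$ into global minimality; without it the family you exhibit is not (as far as your argument shows) a witness for the theorem. If you want to keep your "integrate $\Pr(\text{min}\mid\alpha)$ over $\alpha$" framing, it can be salvaged, but only for the \emph{shifted} family, where it becomes equivalent to the identity $\E(\text{gap})=\int_0^1\Pr(\text{gap}\ge\alpha)\,d\alpha$ and thus collapses to the paper's argument.
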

\begin{proof}
	We consider the strategy from Thorup and P\v{a}tra\c{s}cu~\cite[Section 2.3]{thorupind} where we hash
	$X$ into $[t]$, where $t$ power of $2$ such that $t = \Theta(n)$.
	We use the strategy to determine the first $\log t$ bits of the values of $h$ and let the remaining bits
	be chosen independently and uniformly at random. The strategy ensures that for every
	$\ell \in \left [ \frac{2}{3} \log t, \frac{5}{6} \log t \right ]$ with probability $\Theta(2^\ell/n)$
	there exists an interval $I$ of size $\Theta(2^{-\ell})$ such that $h(x_0)$ is uniformly distributed
	in $I$ and $I$ contains at least $\frac{t}{2^\ell} \cdot (1 + \Omega(1))$ keys from $X$. Furthermore
	these events are disjoint.
	From the definition of the algorithm we see that for every
	$\ell \in \left [ \frac{2}{3} \log t, \frac{5}{6} \log t \right ]$ with probability $\Theta(2^\ell/n)$
	there exists an interval $I$ of size $\Theta(2^{-\ell})$ such that $h(x_0)$ is uniformly distributed
	in $I$ and $I$ contains no other element than $h(x_0)$.
	Let $y$ be the maximal value of all of $h(x_1), \ldots, h(x_n)$ which are smaller than $h(x_0)$ and
	$0$ if all hash values are greater than $h(x_0)$. Then we know that:
	\[
		\E ( h(x_0) - y ) \ge
		\sum_{\ell \in \left [ \frac{2}{3} \log t, \frac{5}{6} \log t \right ]}
		\Theta \left ( \frac{2^\ell}{n} \right ) \cdot \Theta(2^{-\ell})
		= \Theta \left ( \frac{\log n}{n}  \right )
	\]
	We know define the hash function $h' : X \to (0,1)$ by $h'(x) = \left ( h(x) - z \right )\bmod 1$
	where $z \in (0,1)$ is chosen uniformly at random. Now fix the choice of $h$. Then $h'(x_0)$ is smaller
	than $\min \set{h'(x_1), \ldots, h'(x_n)}$ if $z \in (y, h(x_0))$. Hence for this fixed choice of $h$:
	\[
		\Pr \left (
			h'(x_0) < \min \set{h'(x_1), \ldots, h'(x_n)}
			\mid
			h
		\right )
		\ge
		h(x_0) - y
	\]
	Therefore
	\begin{align*}
		\Pr \left (
			h'(x_0) < \min \set{h'(x_1), \ldots, h'(x_n)}
		\right )
		&\ge
		\E \left ( h(x_0) - y\right )\\
		&=
		\Omega \left ( \frac{\log n}{n} \right )
		=
		\Omega \left ( \frac{\log (n+1)}{n+1} \right )
	\end{align*}
	and $h$ satisfies \Cref{eq:lbminwise4}
	\qed
\end{proof}

The lower bound for $k=2$ seen in \Cref{tab:minwise} is shown in the following theorem, using a probabilistic mix between distribution strategies as the main ingredient.
\begin{theorem}
	For any key set $X = \set{x_0,x_1,\ldots,x_n}$ there exists a random hash function
	$h : X \to [0,1)$ that is $2$-independent such that
	\[
		\Pr \left ( h(x_0) < \min_{i \in \set{1,\ldots,n}} \set{h(x_i)} \right )
		=
		\Omega \left ( \frac{1}{\sqrt{n}} \right )
	\]
\end{theorem}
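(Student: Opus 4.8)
The plan is to construct $h$ as a probabilistic mixture of two sampling ``strategies'' and to exploit the elementary fact that a convex combination of $2$-independent hash functions $X\to[0,1)$ is again $2$-independent: the defining conditions ``$h(u)$ has uniform density on $[0,1)$'' and ``$(h(u),h(v))$ has uniform density on $[0,1)^2$'', for all distinct $u,v$, are linear constraints on the underlying distribution and so survive mixing. Since $X$ is merely an abstract $(n+1)$-element set we may define $h$ on it however we like, so it suffices to exhibit one such mixture for which $\Pr(h(x_0)<\min_{i\ge1}h(x_i))=\Omega(1/\sqrt n)$.

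I would fix a threshold $\delta=\Theta(1/\sqrt n)$ and a weight $p=\delta(1-\delta)$. The ``aggressive'' strategy $\mathcal A$, used with probability $p$: put $h(x_0)\sim U[0,\delta)$ and, independently, $h(x_1),\dots,h(x_n)\sim U[\delta,1)$ with the $h(x_i)$ drawn pairwise-uniformly among themselves; then under $\mathcal A$ alone $h(x_0)<\delta\le h(x_i)$ for every $i$, so $x_0$ is the minimum with probability $1$. The ``correction'' strategy $\mathcal B$, used with probability $1-p$, is defined as the (essentially forced) distribution that makes the mixture's one- and two-dimensional marginals uniform: solving for them, $\mathcal B$ must give $h(x_0)$ density $\delta/(1-p)$ on $[0,\delta)$ and $1/(1-p)$ on $[\delta,1)$; each $h(x_i)$ density $1/(1-p)$ on $[0,\delta)$ and $(1-\delta)/(1-p)$ on $[\delta,1)$; the pair $(h(x_0),h(x_i))$ must be uniform on $[0,1)^2\setminus\bigl([0,\delta)\times[\delta,1)\bigr)$ — and it is exactly here that $p=\delta(1-\delta)=\operatorname{area}\bigl([0,\delta)\times[\delta,1)\bigr)$ is forced — and the pair $(h(x_i),h(x_j))$ must be uniform off $[\delta,1)^2$ and slightly lighter on it. One checks these prescribed marginals are nonnegative and mutually consistent for $\delta<\tfrac12$, and that no second-moment obstruction arises: since $\mathcal A$ never places two of $x_1,\dots,x_n$ below $\delta$, all of the mass $\Pr\bigl((h(x_i),h(x_j))\in[0,\delta)^2\bigr)=\delta^2$ must come from $\mathcal B$, and Jensen applied to $z\mapsto\binom z2$ forces $\E_{\mathcal B}\bigl[\#\{i:h(x_i)<\delta\}\bigr]$ not too large relative to $n$; a short computation shows this holds precisely when $\delta\lesssim 1/\sqrt n$, so this construction cannot gain more than $\Theta(1/\sqrt n)$.

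It then remains to realize $\mathcal B$ as an actual joint distribution on $[0,1)^{n+1}$ with the above one- and two-dimensional marginals — this is the crux, since consistency of pairwise marginals does not by itself guarantee a joint distribution exists. I would do this by exhibiting $\mathcal B$ as a further small mixture: with probability $\rho=\delta^2/(1-p)$ draw all $n+1$ values pairwise-uniformly from $[0,\delta)$ (this supplies exactly the required pair-mass on $[0,\delta)^2$ and the required mass of $h(x_0)$ on $[0,\delta)$), and with the remaining probability draw $h(x_0)$ uniformly from $[\delta,1)$ together with an appropriate pairwise-uniform placement of the $h(x_i)$ over $[0,1)$, tuned so that the remaining marginal budgets are met. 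Granting such a $\mathcal B$, the mixture $p\mathcal A+(1-p)\mathcal B$ is exactly $2$-independent by the first paragraph, while $\Pr(h(x_0)<\min_{i\ge1}h(x_i))\ge p\cdot 1=\delta(1-\delta)=\Omega(1/\sqrt n)$. The main obstacle is precisely this verification that $\mathcal B$ exists and meshes with $\mathcal A$ at the level of pairs — one must pin down how much the $h(x_i)$ are coupled inside $\mathcal B$ so that the pair-mass on $[0,\delta)^2$, contributed partly by the ``all-small'' branch and partly by the ``normal'' branch, comes out to exactly the uniform value — and, secondarily, choosing the constant hidden in $\delta=\Theta(1/\sqrt n)$ so as to leave the requisite slack.
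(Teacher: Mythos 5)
Your high-level plan --- mix an ``aggressive'' strategy that forces $x_0$ to be the minimum with a ``correction'' strategy $\mathcal{B}$ chosen so that every one- and two-dimensional marginal of the mixture is exactly uniform --- is the same idea the paper uses (there, four explicit strategies over a discretized range $[\ell+1]$ with $\ell=10\sqrt{n}$ are mixed in two stages so that every pair collides with probability exactly $1/(\ell+1)$). But your argument has a genuine gap precisely at the step you yourself call the crux: you never establish that $\mathcal{B}$ exists, and the concrete two-branch realization you sketch is quantitatively impossible. Under $\mathcal{B}$, each pair $(h(x_i),h(x_j))$ with $i\neq j\ge 1$ must carry mass exactly $\delta^2/(1-p)$ on $[0,\delta)^2$, and each $h(x_i)$ must carry mass $\delta/(1-p)$ on $[0,\delta)$. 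Your ``all-small'' branch, taken with probability $\rho=\delta^2/(1-p)$ and pairwise uniform on $[0,\delta)$, already exhausts the entire pair budget on $[0,\delta)^2$, so in the remaining branch no two of $h(x_1),\dots,h(x_n)$ may lie in $[0,\delta)$ simultaneously. The leftover marginal budget, however, forces
\[
\sum_{i=1}^{n}\Pr\left(h(x_i)<\delta \ \wedge\ \text{branch 2}\right)\;=\;\frac{n\,\delta(1-\delta)}{1-p}\;=\;\Theta\!\left(\sqrt{n}\right),
\]
whereas ``at most one small key per outcome'' caps this sum by $\Pr(\text{branch 2})\le 1$. So the $\mathcal{B}$ you describe cannot exist once $\delta\gg 1/n$; as written, your construction only delivers the trivial $\Omega(1/n)$ bound, not $\Omega(1/\sqrt{n})$. (A minor point: your opening claim that a convex combination of $2$-independent functions is $2$-independent is true but not what you use --- neither $\mathcal{A}$ nor $\mathcal{B}$ is $2$-independent; what you actually need, and implicitly invoke, is that $2$-independence is a condition on pairwise marginals, which average linearly over a mixture.)

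The way out, and the way the paper does it, is to let the ``correction'' part place \emph{many} of the keys $x_1,\dots,x_n$ below the threshold simultaneously, in clusters, rather than at most one at a time: e.g.\ in the paper's strategy $S_1$ every value class other than $g(x_0)$ receives exactly $n/\ell=\Theta(\sqrt{n})$ keys, which meets the single-key budget below the threshold without blowing the pair budget, and a second ``heavy-collision'' strategy ($S_2$, resp.\ $S_4$) is mixed in with tiny probability to push the pairwise collision probability up to exactly $1/(\ell+1)$; a final mix balances the $x_0$-collision probability. Because these strategies are defined as explicit joint assignments, the question of whether prescribed pairwise marginals extend to a joint distribution --- the issue your proposal founders on --- never arises, and the event ``$S_1$ is chosen and $g(x_0)=0$'', of probability $\Omega(1)\cdot\frac{1}{\ell+1}=\Omega(1/\sqrt{n})$, makes $x_0$ the unique minimum. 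To repair your write-up you would have to redesign $\mathcal{B}$ along these clustered lines (at which point you essentially recover the paper's construction), or otherwise prove existence of a joint distribution with your prescribed pairwise marginals; the current proposal does not do this.
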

\begin{proof}
Since we are only interested in proving the asymptotic result, and have no
intentions of optimizing the constant we can wlog. assume that $10\sqrt{n}$
is an integer that divides $n$. To shorten notation we let $\ell = 10\sqrt{n}$.

We will now consider four different strategies for assigning $h$, and they
will choose a hash function $g : X \to [\ell+1]$. Then we let $(U_x)_{x \in X}$
be a family of independent random variables uniformly distributed in $(0,1)$ and
define $h(x) = \frac{g(x) + U_x}{\ell+1}$. The high-level approach is to define distribution strategies such that
some have too high pair-collision probability, some have too low and likewise for the probability of hashing to the same value as $x_0$. Then we mix over the strategies
with probabilities such that in expectation we get the correct number of collisions but we maintain and increased probability of $x_0$ hashing to a smaller value than the rest of the keys.
We will now describe the four strategies for choosing $g$.
\begin{itemize}
\item \textbf{Strategy $S_1$:} $g(x_0)$ is uniformly chosen.
Then $(g(x))_{x \neq x_0}$ is chosen uniformly at random such that $g(x) \neq g(x_0)$
and for each $y \neq g(x_0)$ there are exactly $\frac{n}{\ell}$ hash values
equal to $y$.

\item \textbf{Strategy $S_2$:} $g(x_0)$ is uniformly chosen, and $y_1$ is uniformly
chosen such that $y_1 \neq g(x_0)$. For each $x \in X \backslash \set{x_0}$
we define $g(x) = y_1$.

\item \textbf{Strategy $S_3$:} $g(x_0)$ is uniformly chosen.
Then $Z \subset X$ is chosen uniformly at random such that $\abs{Z} = \frac{\sqrt{n}}{5}$.
We define $g(z) = g(x_0)$ for every $z \in Z$.
Then $(g(x))_{x \neq x_0, x \notin Z}$ is chosen uniformly at random under the
constraint that $g(x) \neq g(x_0)$ and for each $y \neq g(x_0)$ there are at most
$\frac{n}{\ell}$ hash values equal to $y$.

\item \textbf{Strategy $S_4$:} $y \in [\ell+1]$ is uniformly chosen and $g(x) = y$
for each $x \in X$.
\end{itemize}
For each of the four strategies we compute the probability that $g(x_0) = g(x)$
and $g(x) = g(x')$ for each $x,x' \in X \backslash \set{x_0}$. Because of
symmetry the answer is independent of the choice of $x$ and $x'$. This is
a trivial exercise and the results are summarized in \cref{probabilities}.
\begin{table}
    \centering
	\begin{tabular}{|c|c|c|}
    \hline
		Strategy &
		$\Pr_{S_i} \left ( g(x_0) = g(x)  \right )$ &
		$\Pr_{S_i} \left ( g(x)   = g(x') \right )$ \\
        [0.5ex]
        \hline
		{\bf $S_1$} &
		$0$ &
		$\frac{\frac{n}{\ell}-1}{n-1} \left ( < \frac{1}{\ell+1} \right )$ \\
		\hline
		{\bf $S_2$} &
		$0$ &
		$1$ \\
		\hline
		{\bf $S_3$} &
		$\frac{1}{5\sqrt{n}}$ &
		$\le \frac{\frac{n}{\ell}-1}{n-1} +
			\frac{\frac{\sqrt{n}}{5} \left ( \frac{\sqrt{n}}{5} - 1 \right )}{n(n-1)}
			\left ( < \frac{1}{\ell+1} \right )$
		\\
		\hline
		{\bf $S_4$} &
		$1$ &
		$1$ \\
		\hline
	\end{tabular}
	\caption{Strategies for choosing function $h$ and their collision probabilities for $x,x' \in X \backslash \set{x_0}$. The main idea is that there are two strategies with too low probability and two with too high probability, for both types of collisions. However, we can mix probabilistically over the strategies to achieve the theorem.}
	\label{probabilities}
\end{table}

\noindent For event $E$ and strategy $S$ let $\Pr_S(E)$ be the probability of $E$ under strategy $S$.
First we define the strategy $T_1$ that chooses strategy $S_1$ with probability
$p_1$ and strategy $S_2$ with probability $1 - p_1$. We choose $p_1$ such that
$\Pr_{T_1} \left ( g(x) = g(x') \right ) = \frac{1}{\ell+1}$. Then
$p_1 > 1 - \frac{1}{\ell+1}$.
Likewise we define the strategy $T_2$ that chooses strategy $S_3$ with probability
$p_2$ and strategy $S_4$ with probability $1 - p_2$ such that
$\Pr_{T_2}\left ( g(x) = g(x') \right ) = \frac{1}{\ell+1}$. Then $p_2 > 1 - \frac{1}{\ell+1}$
as well. Then:
\[
	 \Pr_{T_1} \left ( g(x) = g(x_0) \right )
	 = 0
	 <
	 \frac{1}{\ell+1}
	 <
	 \frac{2}{\ell}
	 =
	 \frac{1}{5\sqrt{n}}
	 \le
	 \Pr_{T_2} \left ( g(x) = g(x_0) \right )
\]
Now we define strategy $T^*$ that chooses strategy $T_1$ with probability $q$ and
$T_2$ with probability $1-q$. We choose $q$ such that
$\Pr_{T_2} \left ( g(x) = g(x_0) \right ) = \frac{1}{\ell+1}$. Then
$q \ge 1 - \frac{\frac{1}{\ell+1}}{\frac{2}{\ell}} \ge \frac{1}{2}$. Hence $T^*$
chooses strategy $S_1$ with probability $\ge \frac{1}{2} \left ( 1 - \frac{1}{\ell+1}\right)
= \Omega(1)$.

The strategy $T^*$ implies a $2$-independent $g$, since due to the the mix of strategies the pairs of keys collide with the correct probability, that is, the same probability as under full independence. Further, with constant probability $g(x_0)$
is unique. Hence with probability $\Omega \left ( \frac{1}{\ell+1} \right ) =
\Omega \left ( \frac{1}{\sqrt{n}} \right )$, $g(x_0) = 0$ and $g(x_0)$ is unique.
In this case $h(x_0)$ is the minimum of of all $h(x), x \in X$ which concludes the
proof.
\qed
\end{proof} 
\section{Quicksort}

The textbook version of the quicksort algorithm, as explained in~\cite{Motwani:1995:RA:211390}, is the following. As input we are given a set of $n$ numbers $S = \set{x_0,\ldots,x_{n-1}}$ and we uniformly at random choose a pivot element $x_i$. We then
compare each element in $S$ with $x_i$ and determine the sets $S_1$ and $S_2$ which consist of the elements that
are smaller and greater than $x_i$ respectively. Then we recursively call the procedure on $S_1$ and $S_2$ and output the sorted sequence $S_1$ followed by $x_i$ and $S_2$. For this setting there are to the knowledge of the authors no known bounds under limited independence.

We consider two different settings where our results seen in \Cref{tab:quicksort} apply.\\ 
{\bf Setting 1}. Firstly, we consider the same setting as in \cite{Karloff:1988:RAP:62212.62242}. Let the input again be $S = \set{x_0,\ldots,x_{n-1}}$. The pivot elements are pre-computed the following way: let random variables $Y_1, \ldots, Y_n$ be $k$-independent and each $Y_i$ is uniform over $[n]$. The $i$th pivot element is chosen to be $x_{Y_i}$. Note that the sequence of $Y_i$'s is not always a permutation, hence a ``cleanup'' phase is necessary afterwards in order to ensure pivots have been performed on all elements.\\
{\bf Setting 2}. The second setting we consider is the following. Let $Z = Z_1, \ldots, Z_n$ be a sequence of $k$-independent random variables that are uniform over the interval $(0,1)$. Let $\min(j,Z)$ denote the index $i$ of the $j$'th smallest $Z_i$. We choose pivot element number $j$ to be $x_{\min(j,Z)}$. Note that the sequence $Z$ here defines a permutation with high probability and so we can simply repeat the random experiments if any $Z_i$ collide. 

In this section we show the results of \Cref{tab:quicksort} in Setting 1. We refer to \Cref{sec:settingtwo}
for proofs for Setting 2 and note that the same bounds apply to both settings.

Recall, that we can use the results on min-wise hashing to show upper bounds on the running time. The key
to sharpening this analysis is to consider a problem related to that of min-wise hashing. In \Cref{lem:boundC4} 
we show that for two sets $A,B$ satisfying $\abs{A} \le \abs{B}$ there are only $O(1)$ pivot elements chosen
from $A$ before the first element is chosen from $B$. We could use a min-wise type of argument to show that a
single element $a \in A$ is chosen as a pivot element before the first pivot element is chosen from $B$ with
probability at most $\Oh \left ( \frac{\log n}{\abs{B}} \right )$. However, this would only gives us an upper bound
of $\Oh \left ( \log n \right )$ and not $\Oh ( 1 )$.

\begin{lemma}
	\label{lem:boundC4}
	Let $h : [n] \to [n]$ be a $4$-independent hash function and let $A, B \subset [n]$ be disjoint sets
	such that $\abs{A} \le \abs{B}$. Let $j \in [n]$ be the smallest value such that $h(j) \in B$, and
	$j = n$ if no such $j$ exist. Then let $C$ be the number of $i \in [j]$ such that $h(i) \in A$, i.e.
	\[
		C
		=
		\abs{
			\set{
				i \in [n]
				\mid
				h(i) \in A, \
				h(0), \ldots, h(i-1) \notin B
			}
		}
	\]
	Then $\E \left ( C \right ) = O(1)$.
\end{lemma}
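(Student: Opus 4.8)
The plan is to bound $\E(C)$ through the tail-sum identity $\E(C)=\sum_{t\ge 1}\Pr(C\ge t)$, showing $\Pr(C\ge t)=O(1/t^2)$ so that the series converges to $O(1)$. Write $a=|A|$ and $b=|B|$; we may assume $1\le a\le b\le n$, as otherwise $C\equiv 0$. As the excerpt notes, the naive route of union‑bounding over the individual elements of $A$ with the min‑wise estimate costs a spurious $\log n$ factor, essentially because conditioning on ``$h(i)\in A$'' drops us from $4$‑wise to $3$‑wise independence, hence from a fourth‑moment bound to a Chebyshev bound. To keep all four degrees of independence I would instead work with the \emph{counts} of $A$‑ and $B$‑images over prefixes, which are unconditional sums of $4$‑wise independent $\{0,1\}$ variables.

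The structural heart of the argument is the following containment. Let $J$ be the smallest index with $h(J)\in B$ (and $J=n$ if none exists), let $M$ be the $t$-th smallest element of $\{i: h(i)\in A\}$ (if it exists), and fix a threshold $\theta\in\{0,\dots,n\}$. If $C\ge t$ then $M$ exists and $M<J$; splitting on whether $M<\theta$ or $M\ge\theta$ gives
\[
\{C\ge t\}\ \subseteq\ \Big\{\,\sum_{i<\theta}[h(i)\in A]\ge t\,\Big\}\ \cup\ \Big\{\,\sum_{i<\theta}[h(i)\in B]=0\,\Big\},
\]
since in the first case the $t$ smallest elements of $\{i:h(i)\in A\}$ all lie in $[0,\theta)$, while in the second case $J>M\ge\theta$ forces $[0,\theta)$ to contain no index mapped into $B$. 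Write $Y=\sum_{i<\theta}[h(i)\in A]$ and $W=\sum_{i<\theta}[h(i)\in B]$; both are sums of $4$-wise independent indicators.

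It then remains to choose $\theta$ and estimate the two probabilities. Taking $\theta=\min\{n,\lfloor tn/(2a)\rfloor\}$ makes $\E Y=\theta a/n\le t/2$ always, and — when $\theta<n$ — makes $\E W=\theta b/n\ge\theta a/n\ge t/2-1\ge t/4$ for $t\ge 4$, where the hypothesis $a\le b$ is exactly what is used. Applying the fourth‑moment bound (\Cref{fact:kthmoment}) to $Y$, i.e. $\E((Y-\E Y)^4)=O((\E Y)^2+\E Y)$, gives
\[
\Pr(Y\ge t)\le\Pr\big(|Y-\E Y|\ge t/2\big)\le\frac{O((\E Y)^2+\E Y)}{(t/2)^4}=O(1/t^2),
\]
and likewise, since $\E W\ge t/4\ge 1$,
\[
\Pr(W=0)\le\Pr\big(|W-\E W|\ge\E W\big)\le\frac{\E((W-\E W)^4)}{(\E W)^4}=O\big(1/(\E W)^2\big)=O(1/t^2),
\]
while for $\theta=n$ the second event is vacuous ($M\le n-1<\theta$). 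Hence $\Pr(C\ge t)=O(1/t^2)$ for all $t\ge 4$; bounding $\Pr(C\ge t)\le 1$ for $t\le 3$ and summing yields $\E(C)=O(1)$.

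The main obstacle I anticipate is the structural step: one must set up the two ``bad'' events so that they remain genuinely unconditional (otherwise, as above, one is thrown back to $O(\log n)$), and pick the threshold so that the $A$-count event and the $B$-count event are simultaneously controlled — which is precisely where $|A|\le|B|$ enters. The moment estimates and the convergent sum are routine once \Cref{fact:kthmoment} is available, and it is this single point where $4$-independence (rather than less) is needed.
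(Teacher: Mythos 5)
Your proof is correct, and it shares the paper's key ingredients --- unconditional prefix counts of $A$- and $B$-hits as sums of $4$-wise independent indicators, the dichotomy ``many $A$-hits in a prefix'' versus ``no $B$-hit in that prefix'', and the fourth-moment bound of \Cref{fact:kthmoment} --- but the bookkeeping is genuinely different. The paper fixes the dyadic prefixes $[0,2^\ell)$, writes $\E(C)\le\sum_\ell \E\left(\abs{C_\ell}\,[E_{\ell-1}]\right)$, splits each term as $\E\left((\abs{C_\ell}-\alpha 2^{\ell+1})^+\right)+\alpha 2^{\ell+1}\E([E_{\ell-1}])$, and sums a geometric series centered at $\alpha 2^\ell\approx 1$ (using \Cref{fact:funkysum} to control the excess term); you instead use the tail-sum identity $\E(C)=\sum_t\Pr(C\ge t)$ and, for each $t$, choose a single adaptive prefix length $\theta\approx tn/(2\abs{A})$ so that the $A$-count event and the $B$-empty event are simultaneously $\Oh(1/t^2)$. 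Your route buys explicit tail bounds $\Pr(C\ge t)=\Oh(1/t^2)$ (strictly more information than the expectation) and avoids the double sum over $r$ and $\ell$; the paper's dyadic decomposition treats all scales uniformly, is reused essentially verbatim for Setting 2 in the appendix, and degrades transparently to the $\Oh(\log n)$ bound under $2$-independence claimed in \Cref{thm:2QS} --- though your argument degrades just as gracefully, giving $\Oh(1/t)$ tails and hence $\Oh(\log n)$ after summing $t$ up to $n$. Your structural containment and the choice of $\theta$ (where $\abs{A}\le\abs{B}$ enters, and where the case $\theta=n$ is correctly dismissed as vacuous) are sound, so there is no gap.
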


Before we prove \Cref{lem:boundC4} we first show how to apply it to guarantee that 
quicksort only makes $\Oh \left ( n \log n \right )$ comparisons.

\begin{theorem}
	\label{thm:mainQS}
	Consider quicksort in Setting 1 where we sort a set $S = \set{x_0,\ldots,x_{n-1}}$ and pivot elements are chosen using a $4$-independent hash function. For any $i$
	the expected number of times $x_i$ is compared with another element $x_j \in S \backslash \set{x_i}$ 
	when $x_j$ is chosen as a pivot element is $\Oh \left ( \log n \right )$. In particular the expected running
	time is $\Oh \left ( n \log  n \right )$.
\end{theorem}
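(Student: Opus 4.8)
The plan is to deduce the bound from \Cref{lem:boundC4} by a dyadic decomposition of the index range around $x_i$, turning the desired $O(\log n)$ expected comparisons into a sum of $O(\log n)$ quantities each controlled by \Cref{lem:boundC4} at the $O(1)$ level.

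First I would set up the combinatorics. Assume without loss of generality that the input is already sorted, $x_0<x_1<\dots<x_{n-1}$; then in Setting 1 the pivot attempts are exactly given by the $4$-independent sequence $Y_1,\dots,Y_n$ over $[n]$, and since an element becomes a pivot precisely at the first time its index is drawn (it is still active then), only first occurrences in the $Y$-sequence matter. The classical fact I would record is: for $j>i$, the element $x_i$ is compared with $x_j$ at the moment $x_j$ serves as a pivot if and only if, among $\{i,i+1,\dots,j\}$, the index $j$ is the first to occur in the $Y$-sequence; the mirror statement holds for $j<i$. So the random variable to bound is the number of $j\neq i$ that ``win'' the index interval between $i$ and $j$.

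Next I would decompose. To the right of $i$, for $t\ge 0$ let $B_t=\{i,\dots,i+2^t-1\}$ and $A_t=\{i+2^t,\dots,i+2^{t+1}-1\}\cap[n]$; these are disjoint, $B_t\subseteq[n]$ whenever $A_t\neq\emptyset$, and $\lvert A_t\rvert\le 2^t=\lvert B_t\rvert$. If $j\in A_t$ wins its index interval down to $i$, then in particular $j$ occurs before any element of $B_t$, so the number of such $j$ in block $t$ is at most the quantity $C$ of \Cref{lem:boundC4} applied with $(A,B)=(A_t,B_t)$ — note $C$ over-counts, since it allows repeated draws and does not check that $x_j$ actually stayed in $x_i$'s subproblem, both harmless for an upper bound — and $\E(C)=O(1)$. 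Summing over the $O(\log n)$ nonempty blocks to the right, and symmetrically to the left, gives that $x_i$ is compared with $O(\log n)$ pivots in expectation. For the ``in particular'' statement I would use that the total number of comparisons performed equals $\sum_{\text{pivot }p}(\lvert\text{subproblem of }p\rvert-1)=\sum_i\#\{p:x_i\text{ compared with pivot }p\}$, whose expectation is therefore $n\cdot O(\log n)=O(n\log n)$; since partitioning a size-$m$ subproblem costs $O(m)$ and these sizes sum to $O(n)$ plus the number of comparisons, and the cleanup pass over indices never drawn by the $Y_t$'s is $O(n\log n)$ by Karloff--Raghavan, the expected running time is $O(n\log n)$.

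The main obstacle, granting \Cref{lem:boundC4}, is getting the reduction exactly right: establishing the equivalence between ``$x_i$ compared with pivot $x_j$'' and ``$j$ wins its index interval'', checking that the over-counting inherent in the definition of $C$ only ever inflates the estimate, and confirming that the hypothesis $\lvert A_t\rvert\le\lvert B_t\rvert$ survives truncation of $A_t$ at the boundary of $[n]$. The genuinely substantial work lives in \Cref{lem:boundC4} itself: it is exactly its $O(1)$ bound — rather than the $O(\log n)$ a union bound over the min-wise estimate would give — that upgrades the running time from the $O(n\log^2 n)$ of lower independence to $O(n\log n)$.
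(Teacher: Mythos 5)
Your proof is correct and takes essentially the same route as the paper's: the same dyadic decomposition of indices around $x_i$, with \Cref{lem:boundC4} supplying an $O(1)$ bound on the expected number of ``winners'' at each of the $O(\log n)$ scales on either side, then linearity of expectation over all $i$, plus the standard pivot/comparison characterization. The only difference is that you attribute the $O(n \log n)$ cleanup cost to Karloff--Raghavan, whereas the paper proves it itself via \Cref{lem:cleanup} (needing only $2$-independence); if anything, your bookkeeping of which dyadic block is the counted set, which is the barrier, and how truncation at the array boundary preserves $\abs{A_t}\le\abs{B_t}$ is spelled out more carefully than in the paper's own write-up.
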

\begin{proof}
	Let $\pi : [n] \to [n]$ be a permutation of $[n]$ such that $x_{\pi(0)},\ldots,x_{\pi(n-1)}$ is sorted
	ascendingly. Then $\pi \circ h$ is a $k$-independent function as well, and therefore wlog. we
	assume that $x_0,\ldots,x_{n-1}$ is sorted ascendingly.
	
	Fix $i \in [n]$ and let $X = \set{x_{i+1}, \ldots, x_{n-1}}$. First we will upper bound the expected number of comparisons $x_i$ makes with elements from $X$ when an element of $X$ is chosen as pivot.
	We let $A_\ell$ and $B_\ell$ be the sets defined by
	\begin{align*}
		A_\ell = \set{x_j \mid j \in \left [ i, i+2^{\ell-1} \right ) \cap [n]}
		B_\ell = \set{x_j \mid j \in \left [ i+2^{\ell-1}, i+2^{\ell} \right ) \cap [n]}
	\end{align*}
	For any $x_j \in A_\ell$, $x_j$ is compared with $x_i$ only if it is chosen as a pivot element before 
	any element of $B_\ell$ is chosen as a pivot element. By \Cref{lem:boundC4} the expected number of times
	this happens is $\Oh(1)$ for a fixed $\ell$ since $\abs{B_\ell} \ge \abs{A_\ell}$. Since $A_\ell$ is empty
	when $\ell > 1 + \log n$ we see that $x_i$ is in ms{expectation only compared $\Oh \left ( \log n \right )$ times to the elements of $X$}. 
	We use an analogous argument to count the number of comparisons between $x_i$
	and $x_0,x_1,\ldots,x_{i-1}$ and so we have that every element makes in expectation $\Oh ( \log n)$ comparisons. As we have $n$ elements it follows directly from linearity of expectation that the total number of comparisons made is in expectation $\Oh ( n \log n)$. The last minor ingredient is the running time of the cleanup phase of Setting 1. We show in \Cref{lem:cleanup} that this uses expected time $\Oh( n \log n)$ for $k=2$, hence the stated running time of the theorem follows.
	\qed
\end{proof}

We now show \Cref{lem:boundC4}, which was a crucial ingredient in the above proof. 

\begin{proof}[of \Cref{lem:boundC4}]
Wlog.~assume that $\abs{A} = \abs{B}$ and let $m$ the size of $A$ and $B$. Let $\alpha = \frac{m}{n}$.

For each non-negative integer $\ell \ge 0$ let $C_\ell = \set{i \in [n] \mid i < 2^\ell \mid h(i) \in A}$.
Let $E_\ell$ be the event that $h(j) \notin B$ for all $j \in [n]$ such that $j < 2^\ell$. It is now
easy to see that if $i \in C$ then for some integer $\ell \le 1 + \lg n$, $i \in C_\ell$ and $E_{\ell-1}$ occurs.
Hence:
\begin{align}
	\label{eq:splitBound}
	\E(C) \le 
	\sum_{\ell = 0}^{\floor{\lg n}+1}
	\E \left ( \abs{C_\ell} \cdot \left [ E_{\ell-1} \right ] \right )
\end{align}
Now we note that
\begin{equation}\label{eq:expectationbound}
	\E \left ( \abs{C_\ell} \left [ E_{\ell-1} \right ] \right )
	\le
	\E \left (  \left ( \abs{C_\ell} - \alpha 2^{\ell+1} \right )^+ \right )
	+
	\E \left ( \alpha 2^{\ell+1} \cdot \left [ E_{\ell-1} \right ] \right )
\end{equation}
where $x^+$ is defined as $\max\set{x,0}$.

First we will bound $\E \left ( \left ( \abs{C_\ell} - \alpha 2^{\ell+1} \right )^+ \right )$ when
$\alpha 2^{\ell} \ge 1$.
Note that for any $r \in \mathbb{N}$:
\begin{align}\label{eq:term1}
	\Pr\left ( (\abs{C_\ell} - \alpha 2^{\ell+1})^+ \ge r \right )
	&=
	\Pr \left ( \abs{C_\ell} - \E(\abs{C_\ell}) \ge \alpha 2^{\ell} + r \right )\\\label{eq:4mom}
	&\le
	\frac{ \E \left ( \abs{C_\ell} - E \abs{C_\ell} \right )^4}{(\alpha 2^{\ell} + r)^4} 
\end{align}
Now consider \Cref{fact:kthmoment,fact:funkysum} which we will use together with \Cref{eq:term1}.
\begin{fact}\label[fact]{fact:kthmoment}
	Let $X = \sum_{i=1}^n X_i$ where $X_1,\ldots,X_i$ are $k$-independent random variables in $[0,1]$ for
	some even constant $k \ge 2$. Then
	\[
		\E \left ( \left ( X - \E X \right )^k \right ) = 
		\Oh \left ( 
			\left ( \E  X \right ) + \left ( \E  X \right )^{k/2}
		\right )
	\]
\end{fact}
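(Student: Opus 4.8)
The plan is to expand $(X-\E X)^k$ by the multinomial theorem and use $k$-wise independence to annihilate almost all of the resulting monomials. Set $Y_i = X_i - \E X_i$, so that $\E Y_i = 0$ and $|Y_i| \le 1$ (because $X_i \in [0,1]$). Then $(X - \E X)^k = \left(\sum_{i=1}^n Y_i\right)^k = \sum_{(i_1,\ldots,i_k) \in [n]^k} Y_{i_1}\cdots Y_{i_k}$. Every monomial $Y_{i_1}\cdots Y_{i_k}$ involves at most $k$ distinct indices, so by $k$-independence its expectation factors into a product of expectations of powers of the individual $Y_i$'s. Since $\E Y_i = 0$, any monomial in which some index appears with multiplicity exactly $1$ has expectation $0$. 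Hence only monomials in which every distinct index appears with multiplicity at least $2$ survive, and such a monomial has at most $\lfloor k/2\rfloor$ distinct indices.

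Next I would group the surviving monomials by multiplicity pattern. A pattern is a choice of $j \le k/2$ distinct indices together with multiplicities $a_1,\ldots,a_j \ge 2$ with $\sum_{l} a_l = k$; there are only $\Oh_k(1)$ such patterns (it depends only on the constant $k$), and the number of $k$-tuples realising a given pattern on a fixed index set is the multinomial coefficient $k!/(a_1!\cdots a_j!) \le k!$. For the expectation I would use $\E\!\left(|Y_i|^{a}\right) \le \E\!\left(Y_i^2\right)$ for all $a \ge 2$, valid since $|Y_i| \le 1$; writing $\sigma_i^2 := \E\!\left(Y_i^2\right) = \Var(X_i) \le \E\!\left(X_i^2\right) \le \E(X_i)$ (using $X_i \in [0,1]$ so $X_i^2 \le X_i$), the contribution of a fixed pattern summed over all choices of its $j$ distinct indices is bounded, in absolute value, by $\prod_{l=1}^{j}\left(\sum_{i=1}^n \sigma_i^2\right) \le \left(\sum_{i=1}^n \E(X_i)\right)^{j} = (\E X)^{j}$.

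Putting this together, the total contribution of all patterns with exactly $j$ distinct indices is $\Oh_k\!\left((\E X)^{j}\right)$, and summing over $j = 1,\ldots,\lfloor k/2\rfloor$ gives $\E\!\left((X-\E X)^k\right) = \Oh_k\!\left(\sum_{j=1}^{\lfloor k/2\rfloor}(\E X)^{j}\right)$. Finally a one-line case split finishes the argument: if $\E X \le 1$ the $j=1$ term dominates and the sum is $\Oh(\E X)$, whereas if $\E X \ge 1$ the $j = k/2$ term dominates and the sum is $\Oh\!\left((\E X)^{k/2}\right)$; in both cases $\E\!\left((X-\E X)^k\right) = \Oh\!\left(\E X + (\E X)^{k/2}\right)$, with the implied constant depending only on $k$, which is constant. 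The only genuinely delicate step is the bookkeeping in the multinomial expansion — specifically justifying that monomials with a singleton index vanish and that the remaining ones touch at most $k/2$ distinct indices — after which the numerical estimates are routine.
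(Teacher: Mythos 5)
Your proof is correct. The paper itself states this as a \emph{Fact} and gives no proof (it is a standard moment bound for sums of $k$-wise independent bounded variables), so there is nothing to compare against line by line; your argument is the canonical derivation. All the delicate points are handled properly: the multinomial expansion, the observation that $k$-independence lets the expectation of each monomial (which touches at most $k$ distinct indices) factor so that any monomial containing a singleton index vanishes, the conclusion that surviving monomials have at most $k/2$ distinct indices, the bound $\E\left(|Y_i|^{a}\right) \le \E\left(Y_i^2\right) \le \E(X_i)$ using $|Y_i|\le 1$ and $X_i \in [0,1]$, and the final case split on $\E X \lessgtr 1$ giving $\Oh\left(\E X + (\E X)^{k/2}\right)$ with constants depending only on the constant $k$. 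Note also that your argument only uses that every $k$-subset of the $X_i$ is independent (not uniformity), which is exactly the hypothesis under which the paper invokes the fact, e.g.\ for indicator variables such as $[h(i)\in A]$.
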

\begin{fact}\label[fact]{fact:funkysum}
Let $r,l \in \R$. It holds that \[ \sum_{l \geq 1} \frac{1}{(r+l)^4} \leq \frac{1}{r^3} \text{.}\]
\end{fact}
\begin{proof}
We have \[ \sum_{l \geq 1} \frac{1}{(r+l)^4} \leq  \int_0^{\infty} \frac{1}{(r+x)^4}\mathrm{d}x = \left[ - \frac{1}{3} \frac{1}{(r+x)^3} \right]_0^\infty \leq  \frac{1}{r^3} \text{.}\]
\qed
\end{proof}
Note that whether each element $i \in [n], i < 2^k$ is lies in $C_\ell$ is only dependent on $h(i)$. Hence
$\abs{C_\ell} = \sum_{i \in [n], i < 2^k} [h(i) \in A]$ is the sum of $4$-independent variables with with 
values in $[0,1]$ and hence we can use \Cref{fact:kthmoment} to give an upper bound on \Cref{eq:term1}.
Combining \Cref{fact:kthmoment,fact:funkysum,eq:term1} we see that:
\begin{align}
	\label{eq:Cbound1}
		\E \left ( \left ( \abs{C_\ell} - \alpha 2^{\ell+1} \right )^+ \right )
	&
		= 
		\sum_{r \ge 1}
		\Pr\left ( (\abs{C_\ell} - \alpha 2^{\ell+1})^+ \ge r \right )
	\notag
	\\ 
	&
		\le 
		\sum_{r \ge 1}
		\frac{ \E \left ( \abs{C_\ell} - E \abs{C_\ell} \right )^4}{(\alpha 2^{\ell} + r)^4}
	\notag
	\\
	&
		= 
		\Oh \left (
			\frac{\left ( \alpha 2^{\ell} \right )^2}{\left ( \alpha 2^{\ell} \right )^3}
		\right )
		=
		\Oh \left (
			\frac{1}{\alpha 2^{\ell}}
		\right )
\end{align}

We we will bound $\E \left ( \alpha 2^{\ell+1} \cdot \left [ E_{\ell-1} \right ] \right )$
(the second term of \Cref{eq:expectationbound}) in a similar fashion still assuming that $\alpha 2^\ell \ge 1$.
For each $i \in [n]$ such that $i < 2^{\ell-1}$ let $Z_i = 1$ if $h(i) \in B$ and $Z_i = 0$ otherwise.
Let $Z$ be the sum of these $4$-independent variables, then $E_k$ is equivalent to $Z = 0$.
By \Cref{fact:kthmoment}
\[
	\E \left ( \left [ E_{\ell-1} \right ]  \right )
	=
	\Pr \left (
		Z = 0
	\right )
	\le
	\Pr \left (
		\abs{Z - \E Z} \ge \E Z
	\right )
	\le
	\frac{E (Z - \E Z)^4}{(\E Z)^4}
	=
	\Oh \left ( \frac{1}{(\E Z)^2} \right )
\]
Since $\E(Z) = \alpha \ceil{2^{\ell-1}}$ we see that
\begin{align}
	\label{eq:Cbound2}
	\alpha 2^{\ell+1} \cdot  \E \left ( \left [ E_k \right ] \right )
	=
	\Oh \left (
		\frac{1}{\alpha 2^\ell}
	\right )
\end{align}
By combining \Cref{eq:expectationbound,eq:Cbound1,eq:Cbound2} we see that for any $\ell$ such that $\alpha 2^\ell \ge 1$:
\begin{align}
	\label{eq:allBigEll}
	\E \left ( \abs{C_\ell} \left [ E_{\ell-1} \right ] \right )
	\le 
	\Oh \left (
		\frac{1}{\alpha 2^\ell}
	\right )
\end{align}
Furthermore, for any $\ell$ such that $\alpha 2^\ell \le 1$ we trivially get:
\begin{align}
	\label{eq:allSmallEll}
	\E \left ( \abs{C_\ell} \left [ E_{\ell-1} \right ] \right )
	\le 
	\E \left ( \abs{C_\ell} \right )
	\le 
	2^\ell \alpha
\end{align}
To conclude we combine \Cref{eq:splitBound,eq:allBigEll,eq:allSmallEll} and finish the proof
\[
	\E \left ( C \right )
	\le
	\Oh \left ( \sum_{\ell, \alpha 2^\ell \ge 1} \frac{1}{\alpha 2^\ell} \right )
	+
	\Oh \left ( \sum_{\ell, \alpha 2^\ell \le 1} \alpha 2^\ell \right )
	=
	\Oh(1)
\]
\qed
\end{proof}
We now show that the cleanup phase as described by Setting 1 takes $\Oh(n \log n)$ for $k=2$, which
means it makes no difference to asymptotic running time of quicksort. 
\begin{lemma}\label{lem:cleanup}
	Consider quicksort in Setting 1 where we sort a set $S = \set{x_0,\ldots,x_{n-1}}$ with a 
	$2$-independent hash function. The cleanup phase takes $\Oh \left ( n \log n \right )$ time.
\end{lemma}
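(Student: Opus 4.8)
The plan is to first pin down which blocks survive the $n$ pivot rounds, and only then estimate the cost of finishing them off. As in the proof of \Cref{thm:mainQS}, composing the pivot map $i\mapsto Y_i$ with the sorting permutation preserves $2$-independence, so I may assume $x_0<x_1<\dots<x_{n-1}$. Write $R=\set{Y_1,\dots,Y_n}\subset[n]$ for the set of indices ever used as a pivot. The first step is a purely deterministic structural observation: whenever $x_c$ is used as a pivot, its current block is split into the part of indices below $c$, the singleton $\set{c}$, and the part above $c$. Hence after all $n$ rounds every $c\in R$ sits in a singleton block, and the indices in $[n]\sm R$ are partitioned into the maximal runs $I_1,\dots,I_t$ of consecutive integers; each $I_s$ was never split (its two neighbours, if any, lie in $R$), so these runs are exactly the blocks the cleanup still has to sort, and $\sum_s\abs{I_s}=n-\abs{R}\le n$.

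Second, I would bound the cleanup cost on one surviving block. No matter in what order the cleanup selects the not-yet-used pivots inside a run $I_s$ of size $m_s$, each such pivot isolates itself, so there are at most $m_s$ nontrivial pivot steps in $I_s$, and each step partitions a sub-block of $I_s$ and therefore costs at most $m_s$ comparisons. Thus the cleanup spends $\Oh(m_s^2)$ time on $I_s$, and the total running time is $\Oh(n)+\Oh\!\bigl(\sum_s m_s^2\bigr)$, the $\Oh(n)$ accounting for the bookkeeping of scanning already-isolated elements.

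Third, and this is the only place randomness is used, I would bound $\E\!\bigl(\sum_s m_s^2\bigr)$ using only $2$-independence. The useful rewriting is $\sum_s m_s^2=\abs{\set{(j,j')\in[n]^2 : j\text{ and }j'\text{ lie in a common run }I_s}}$, and for $j\le j'$ this happens exactly when none of $j,j+1,\dots,j'$ belongs to $R$. For a window $W$ of $w$ consecutive indices let $N_W$ be the number of $i\in[n]$ with $Y_i\in W$; since each $Y_i$ is uniform on $[n]$ we have $\E(N_W)=w$, and since the $Y_i$ are pairwise independent the variance of the sum is the sum of the variances, giving $\Var(N_W)=n\cdot\frac{w}{n}\bigl(1-\frac{w}{n}\bigr)\le w$. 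Chebyshev then yields $\Pr(W\cap R=\emptyset)=\Pr(N_W=0)\le\Var(N_W)/w^2\le 1/w$. Summing over the at most $n$ choices of left endpoint and, for each, over the $n$ possible widths (covering both orders of a pair), $\E\!\bigl(\sum_s m_s^2\bigr)\le 2n\sum_{w=1}^{n}\frac1w=\Oh(n\log n)$, which combined with the second step proves the lemma.

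The main obstacle is really the first step: correctly describing the surviving configuration — that every used pivot ends up isolated and that the remainder is a disjoint union of intervals — since everything downstream (the $\Oh(m_s^2)$-per-block bound and the clean pair-counting identity) rests on it. Once that is in place, the probabilistic content is the single Chebyshev line above, which needs nothing beyond pairwise independence.
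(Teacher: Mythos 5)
Your proposal is correct and follows essentially the same route as the paper: the surviving work consists of maximal runs of consecutive indices never hit by a pivot, each run costs at most quadratically in its length, and the probability that a window of $w$ consecutive indices receives no pivot is $\Oh(1/w)$ by Chebyshev, which only needs pairwise independence. The only difference is bookkeeping — the paper charges each run to a contained dyadic interval and sums over $\log n$ levels, whereas you sum over all $n^2$ windows via the identity $\sum_s m_s^2 = \#\{\text{pairs in a common run}\}$ — and both yield the same $\Oh(n\log n)$ harmonic-sum bound.
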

\begin{proof}
	Assume wlog.~that $n$ is a power of $2$. For each $\ell \in \set{0,1,\ldots,\lg n}$ let $A_\ell$
	be the set of dyadic intervals of size $2^\ell$, i.e.
	\[
		A_\ell = \set{ \left [ i 2^\ell, (i+1)2^\ell \right ) \cap [n] \mid i \in \left [ n 2^{-\ell} \right ] }
	\]
	For any consecutive list of $s$ elements $x_i,\ldots,x_{i+s-1}$ such that none of them are chosen
	as pivot elements, there exist a dyadic interval $I$ of size $\Omega(s)$ such that none of $x_j, j \in I$
	are chosen as pivot elements. Hence we only need to consider the time it takes to sort elements corresponding
	to dyadic intervals. Let $P_\ell$ be an upper bound on the probability that no element from $\left [ 0, 2^\ell \right )$
	is chosen as a pivot element. Then the total running time of the cleanup phase is bounded by:
	\begin{align}
		\label{eq:fissefissefisse}
		\Oh \left (
			\sum_{\ell = 0}^{\lg n} 
				\abs{A_\ell} P_\ell 2^{2\ell}
		\right)
		=
		\Oh \left (
			n
			\sum_{\ell = 0}^{\lg n} 
				2^{\ell} \P_\ell 
		\right)
	\end{align}
	Fix $\ell$ and let $X = \sum_{i=0}^{n-1} \left [h(i) \in \left [ 0, 2^\ell \right ) \right ]$. Then by
	$\E(X) = 2^\ell$, so by Markov's inequality
	\begin{align*}
		\Pr \left (
			X = 0
		\right )
		&\le 
		\Pr \left (
			(X - \E(X))^2 \ge \left(\E(X)\right)^2
		\right )\\
		&\le 
		\frac{\E\left((X - \E(X))^2\right)}{(\E(X))^2}
		=
		\Oh \left (
			\frac{1}{\E(X)}
		\right )
		=
		\Oh \left ( 2^{-\ell} \right )
	\end{align*}
	Plugging this into \Cref{eq:fissefissefisse} shows that the running time is bounded by $\Oh \left ( n \log n \right )$. \qed
\end{proof}
Finally we show the new $2$-independent bound. The argument follows as the $4$-independent argument, except with $2$nd moment bounds instead of $4$th moment bounds.
\begin{theorem}
	\label{thm:2QS}
	Consider quicksort in Setting 1 where we sort a set $S = \set{x_0,\ldots,x_{n-1}}$ and pivot elements are chosen using a $2$-independent hash function. For any $i$
	the expected number of times $x_i$ is compared with another element $x_j \in S \backslash \set{x_i}$ 
	when $x_j$ is chosen as a pivot element is $\Oh \left ( \log^2 n \right )$. In particular the expected running
	time is $\Oh \left ( n \log^2  n \right )$.
\end{theorem}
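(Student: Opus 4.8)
The plan is to mirror the proof of \Cref{thm:mainQS} essentially word for word, everywhere replacing a fourth moment estimate by the corresponding second moment (Chebyshev) estimate, and to absorb the loss of concentration into one extra $\log n$ factor. First I would prove a $2$-independent analogue of \Cref{lem:boundC4}: if $h:[n]\to[n]$ is $2$-independent and $A,B\subset[n]$ are disjoint with $\abs{A}\le\abs{B}$, then the quantity $C$ defined exactly as there satisfies $\E(C)=\Oh(\log n)$ (rather than $\Oh(1)$).

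To prove this weaker lemma I would follow the proof of \Cref{lem:boundC4} step by step: put $\abs{A}=\abs{B}=m$, $\alpha=m/n$, split $\E(C)\le\sum_{\ell=0}^{\floor{\log n}+1}\E(\abs{C_\ell}\,[E_{\ell-1}])$, and bound each term via \Cref{eq:expectationbound}. The only change is that in the analogues of \Cref{eq:Cbound1,eq:Cbound2} I would apply \Cref{fact:kthmoment} with $k=2$, which gives $\E((\abs{C_\ell}-\E\abs{C_\ell})^2)=\Oh(\alpha 2^\ell)$ and $\Pr(Z=0)=\Oh(1/(\alpha 2^\ell))$ when $\alpha 2^\ell\ge 1$, and use the second moment version of \Cref{fact:funkysum}, namely $\sum_{l\ge 1}(r+l)^{-2}\le 1/r$ (the same integral proof). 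These give $\E(\abs{C_\ell}\,[E_{\ell-1}])=\Oh(1)$ for each level $\ell$ with $\alpha 2^\ell\ge 1$, in place of the sharper $\Oh(1/(\alpha 2^\ell))$ that the fourth moment yields. Since $\alpha\ge 1/n$, there are only $\Oh(\log n)$ such levels, so together with the unchanged geometric tail $\sum_{\ell:\,\alpha 2^\ell\le 1}\alpha 2^\ell=\Oh(1)$ from \Cref{eq:allSmallEll} one gets $\E(C)=\Oh(\log n)$.

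With this lemma in hand the theorem follows exactly as \Cref{thm:mainQS} does: fix $i$, assume $x_0,\dots,x_{n-1}$ is sorted, let $A_\ell,B_\ell$ be the doubling blocks of indices to the right of $i$, observe that $x_j\in A_\ell$ is compared to $x_i$ only if it is chosen as a pivot before any element of $B_\ell$, and apply the $2$-independent lemma (since $\abs{B_\ell}\ge\abs{A_\ell}$) to bound the expected number of such comparisons by $\Oh(\log n)$ per level. As $A_\ell$ is empty for $\ell>1+\log n$, summing over levels gives $\Oh(\log^2 n)$ comparisons of $x_i$ with larger elements; the symmetric argument bounds comparisons with smaller elements. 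Linearity over the $n$ choices of $i$ gives $\Oh(n\log^2 n)$ comparisons in total, and \Cref{lem:cleanup} (already stated for $k=2$) bounds the cleanup phase by $\Oh(n\log n)$, which is dominated.

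The main obstacle is purely quantitative bookkeeping inside the weaker version of \Cref{lem:boundC4}: one must check that the second moment bound degrades each level's contribution from $\Oh(1/(\alpha 2^\ell))$ to exactly $\Oh(1)$ and no worse, and that the number of levels with $\alpha 2^\ell$ between $1$ and $n$ is $\Oh(\log n)$ (immediate from $m\ge 1$). Everything else — the doubling decomposition, the reduction to \Cref{lem:boundC4}, and the cleanup cost — carries over from the $4$-independent case unchanged.
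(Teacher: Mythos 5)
Your proposal is correct and follows essentially the same route as the paper: the paper's own proof of \Cref{thm:2QS} is exactly the sketch you spell out, namely rerun the argument of \Cref{thm:mainQS} with a $2$-independent analogue of \Cref{lem:boundC4} in which the fourth-moment bounds are replaced by second-moment (Chebyshev) bounds, degrading each level's contribution to $\Oh(1)$ and hence $\E(C)=\Oh(\log n)$, with the cleanup cost still covered by \Cref{lem:cleanup}. Your write-up simply fills in the bookkeeping the paper omits, and the details you give are consistent with the paper's computation.
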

\begin{proof}
The proof for $\Oh(n \log^2 n)$ expected running time follows from an analogous argument as \Cref{thm:mainQS}. The main difference being that the analogous lemma to \Cref{lem:boundC4} yields  $\E(C) = \Oh( \log n)$ instead of $\E(C) = \Oh(1)$, which implies the stated running time. This is due to the fact that as we have $2$-independence we must use the weaker $2$nd moment bounds instead of $4$th moment bounds as used e.g. in \Cref{eq:4mom}. Since the cleanup phase takes time $\Oh( n \log n)$ time even for $k=2$ due to \Cref{lem:cleanup} the stated time holds. Otherwise the proof follows analogously and we omit the full argument due to repetetiveness. \qed
\end{proof}
\subsection{Binary planar partitions and randomized treaps}
The result for quicksort shown in \Cref{thm:mainQS} has direct implications for two classic randomized algorithms. Both algorithms are explained in common text books, e.g. Motwani-Raghavan. 

A straightforward analysis of randomized algorithm\cite[Page 12]{Motwani:1995:RA:211390} for construction binary planar bipartitions simply uses min-wise hashing to analyze the expected size of the partition. In the analysis the size of the constructed partition depends on the probability of the event happening that a line segment $u$ comes before a line segment $v$ in the random permutation $u, \ldots, u_i, v$. Using the the min-wise probabilities of \Cref{tab:minwise} directly we get the same bounds on the partition size as running times on quicksort using the min-wise analysis. This analysis is tightened through \Cref{thm:mainQS} for both $k=2$ and $k=4$. 

By an analogous argument, the randomized treap data structure of \cite[Page 201]{Motwani:1995:RA:211390} gets using the min-wise bounds expected node depth $\Oh ( \log n)$ when a treap is built over a size $n$ set. Under limited independence using the min-wise analysis, the bounds achieved are then $\{\Oh(\sqrt{n}), \Oh(\log^2 n), \Oh( \log^2 n), \Oh( \log n) \}$ for $k = \{2,3,4,5\}$ respectively. By \Cref{thm:mainQS} we get $\Oh(\log^2 n)$ for $k=2$ and $\Oh( \log n)$ for $k=4$. 
\section{Largest bucket size}
We explore the standard case of throwing $n$ balls into $n$ buckets using a random hash function. We are interested in analyzing the bucket that has the largest number of balls mapped to it. Particularly, for this problem our main contribution is an explicit family of hash functions that are $k$-independent (remember \Cref{def:hashind}) and where the largest bucket size is $\Omega \left( n^{1/k} \right)$. However we start by stating the matching upper bound.
\subsection{Upper bound}\label{sec:load:upper}
We will briefly show the upper bound that matches our lower bound presented in the next section. We are unaware of literature that includes the upper bound, but note that it follows from a standard argument and is included for the sake of completeness.

\begin{lemma}\label{lem:loadupper}
	Consider the setting where $n$ balls are distributed among $n$ buckets using a
	random hash function $h$.
	For $m = \Omega \left ( \frac{\log n}{\log \log n} \right )$ and any $k \in \mathbb{N}$ such that $k < n^{1/k}$
	then if $h$ is $k$-independent the largest bucket size is $\Oh(m)$
	with probability at least $1 - \frac{n}{m^{k}}$.
\end{lemma}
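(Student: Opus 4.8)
The plan is to run the textbook ``$k$-th moment method'' one bucket at a time and then take a union bound over the $n$ buckets. The natural quantity to control is the $k$-th \emph{falling factorial} moment of a single bucket's load rather than a central moment, since falling factorial moments are exactly computable from $k$-independence and the argument then works for every $k$ (the central-moment estimate of \Cref{fact:kthmoment} only covers even $k$, whereas here $k$ is an arbitrary natural number).

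Concretely, I would fix a bucket $j \in [n]$ and write its load as $X_j = \sum_{i \in [n]} [h(i) = j]$. Since $h$ is $k$-independent, the indicators $\set{[h(i)=j]}_{i\in[n]}$ are $k$-wise independent Bernoulli variables of mean $1/n$, so $\E(X_j)=1$. Expanding $X_j^{\underline{k}} = \sum_{i_1,\ldots,i_k \text{ distinct}} [h(i_1)=j]\cdots[h(i_k)=j]$ and using $k$-independence term by term gives
\[
	\E\!\left( X_j^{\underline{k}} \right) \;=\; n^{\underline{k}}\cdot\frac{1}{n^{k}} \;=\; \prod_{t=0}^{k-1}\!\left(1-\frac{t}{n}\right)\;\le\; 1 .
\]
Because $x\mapsto x^{\underline{k}}$ is nondecreasing for $x\ge k-1$ and $X_j$ is integer-valued, for any real $t\ge k$ the event $\set{X_j\ge t}$ implies $\set{X_j^{\underline{k}}\ge t^{\underline{k}}}$, so Markov's inequality yields $\Pr(X_j\ge t)\le 1/t^{\underline{k}}$.

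I would then take $t=2m$. The statement is empty unless $n/m^{k}<1$, i.e.\ $m> n^{1/k}$, and the hypotheses $k<n^{1/k}$ together with $m=\Omega(\log n/\loglog n)$ guarantee $m>k$ in exactly this regime; hence each of the $k$ factors of $(2m)^{\underline{k}}=(2m)(2m-1)\cdots(2m-k+1)$ is at least $m$, so $(2m)^{\underline{k}}\ge m^{k}$ and $\Pr(X_j\ge 2m)\le 1/m^{k}$. A union bound over the $n$ buckets gives $\Pr(\exists j:\,X_j\ge 2m)\le n/m^{k}$, i.e.\ the largest bucket has size at most $2m=\Oh(m)$ with probability at least $1-n/m^{k}$, which is the claim. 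There is no real obstacle here; the only thing to be slightly careful about is verifying $m>k$ so that $(2m)^{\underline{k}}\ge m^{k}$, and this is precisely what the side conditions $k<n^{1/k}$ and $m=\Omega(\log n/\loglog n)$ buy us (outside that range the bound on the probability is already vacuous).
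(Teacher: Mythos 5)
Your proposal is correct and follows essentially the same route as the paper: both arguments hinge on the exactly computable $k$-th falling-factorial moment of the bucket loads under $k$-independence followed by Markov's inequality, the only (cosmetic) difference being that you apply Markov per bucket and union bound over the $n$ buckets with threshold $t=2m$, whereas the paper applies Markov once to $\sum_i (B_i)^{\underline{k}} \ge (\max_i B_i)^{\underline{k}}$ with threshold $t=m+k$. Your handling of the side conditions (reducing to the non-vacuous regime $m > n^{1/k} > k$ so that $(2m)^{\underline{k}} \ge m^k$) is sound.
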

\begin{proof}
Consider random variables $B_1, \ldots, B_n$, where $B_i$ denotes the number of balls that are distributed
to bin $i$. By definition, the largest bucket size is $\max_i B_i$
Since $(\max_i B_i)^{\underline{k}} \leq \sum_{i} (B_i)^{\underline{k}}$ for any threshold $t$ we see that
\begin{align*}
	\Pr( \max_i B_i \geq t ) =
	\Pr\left( (\max_i B_i)^{\underline{k}} \geq t^{\underline{k}} \right) \leq
	\Pr \left( \sum_i (B_i)^{\underline{k}} \geq t^{\underline{k}} \right) \text{.}
\end{align*}
Since $\sum_i (B_i)^{\underline{k}}$ is exactly the number of ordered $k$-tuples being assigned to the same
bucket we see that $\E \left ( \sum_i (B_i)^{\underline{k}} \right ) = n^{\underline{k}} \cdot \frac{1}{n^{k-1}}$,
because there are 
exactly $n^{\underline{k}}$ ordered $k$-tuples. Hence we can apply Markov's inequality
\[
	\Pr \left( \sum_i (B_i)^{\underline{k}} \geq t^{\underline{k}} \right) \leq
	\frac{\E \left( \sum_i (B_i)^{\underline{k}} \right)}{t^{\underline{k}}} =
	\frac{n^{\underline{k}}}{n^k} \cdot \frac{n}{t^{\underline{k}}}
	\le
	\frac{n}{t^{\underline{k}}}
	\text{.}
\]
Since $k < n^{1/k}$ implies $k = \Oh \left ( \frac{\log n}{\log \log n} \right )$ we see that $k + m = \Theta(m)$.
Letting $t = k + m$ we get the desired upper bound $\frac{n}{m^k}$ on the probability that
$\max_i B_i \geq m + k$ since $(m+k)^{\underline{k}} > m^k$.
\qed
\end{proof}

\subsection{Lower bound}\label{sec:load:lower}

At a high level, our hashing scheme is to divide the buckets into sets of size $p$ and in each set polynomial hashing is used on the keys that do not ``fill'' the set. The crucial point is then to see that for polynomial hashing, the probability that a particular polynomial hashes a set of keys to the same value can be bounded by the probability of all coefficients of the polynomial being zero. Having a bound on this probability, the set size can be picked such that with constant probability the coefficients of one of the polynomials is zero, resulting in a large bucket.

\begin{proof}(of \Cref{thm:loadmain})
	Fix $n$, $m$, and $k$.
	We will give a scheme to randomly choose a vector $x = (x_0,\ldots,x_{n-1}) \in [n]^n$
	such that the entries are $k$-independent.
	
	First we choose some prime
	$p \in \left [ \frac{1}{4} m,
	               \frac{1}{2} m \right ]$.
	This is possible by Bertrand's postulate.

	Let $t = \floor{\frac{n}{p}}$ and partition $[n]$ into $t+1$ disjoint sets
	$S_0,S_1,\ldots,S_{t}$, such that $\abs{S_i} = p$ when $i < t$ and $\abs{S_t} = n-pt = (n \bmod p)$.
	Note that $S_t$ is empty if $p$ divides $n$.
	
	The scheme is the following:
	\begin{itemize}
		\item First we pick $t$ polynomial hash function $h_0,h_1,\ldots,h_{t-1} : [p] \to [p]$
		      of degree $k$, i.e.
		      $h_i(x) = a_{i,k-1}x^{k-1} + \ldots + a_{i,0} \bmod p$ where $a_{i,j} \in [p]$
		      is chosen uniformly at random from $[p]$.
		\item For each $x_i$ we choose which of the events $(x_i \in S_0), \ldots, (x_i \in S_t)$
		      are true such that $P(x_i \in S_j) = \frac{\abs{S_j}}{n}$. This is done independently
		      for each $x_i$.
		\item For each $j = 0,\ldots,t-1$ we let $Y_j = \set{x_i \mid x_i \in S_j}$ be the
		      set of all $x_i$ contained in $S_j$. If $\abs{Y_j} > p$ we let $Z_j \subset Y_j$
		      be a subset with $p$ elements and $Z_j = Y_j$ otherwise.
		      We write $Z_j = \set{x'_0,\ldots,x'_{r-1}}$ and $S_j = \set{s_0,\ldots,s_{p-1}}$.
		      Then we let $x'_{\ell} = s_{h_i(\ell)}, \ell \in [r]$. The values for $Y_j \backslash Z_j$
		      are chosen uniformly in $S_j$ and independently.
		\item For all $x_i$ such that $(x_i \in S_t)$ we uniformly at random and independently
		      choose $s \in S_t$ such that $x_i = s$.
	\end{itemize}
	This scheme is clearly $k$-independent. The at most $p$ elements in $Y_j$ we distribute using a $k-1$
	degree polynomial are distributed $k$-independently as degree $k-1$ polynomials over $p$ are known to be $k$-independent (see e.g.~\cite{joffe1974}).
	The remaining elements are distributed fully independently.
	
	We can write $\abs{S_i} = \sum_{j=0}^{n-1} [x_j \in S_i]$ and therefore $\abs{S_i}$ is the sum of
	independent variables from $\set{0,1}$. Since
	$\E \left ( \abs{S_i} \right ) = p = \omega(1)$ a standard Chernoff bound gives us that
	\begin{align}
		\label{eq:atLeastHalf}
		\Pr \left (
			\abs{S_i}
			\le 
			\left ( 1- \frac{1}{2} \right )p
		\right )
		\le 
		e^{-\Omega(p)} = o(1)
		\ \text{.}
	\end{align}
	For $i \in [t]$ let $X_i$ be $1$ if $S_i$ consists of at least $p/2$ elements and $0$ otherwise.
	In other words $X_i = \left [ \abs{S_i} \ge p/2 \right ]$. By \Cref{eq:atLeastHalf} we see that
	$\E (X_i) = 1-o(1)$. Let $X = \sum_{i=0}^{t-1} X_i$. Then $\E(X) = t(1-o(1))$, so we can 
	apply Markov's inequality to obtain
	\begin{align*}
		\Pr \left (
			X \le \frac{1}{2}t
		\right ) = 
		\Pr \left (
			t - X \ge \frac{1}{2}t
		\right ) \le 
		\frac{\E (t - X)}{\frac{1}{2}t} =
		o(1)
		\ \text{.}
	\end{align*}
	So with probability $1 - o(1)$ at least half of the sets $S_i, i \in [t]$ contain at least $p/2$
	elements. Assume that this happens after we for every $x_i$ fix the choice of $S_j$ such that
	$x_i \in S_j$, i.e. assume $X \ge t/2$.
	Wlog. assume that $S_0, \ldots, S_{\ceil{t/2}-1}$ contain at least $p/2$ elements.
	For each $j \in \left [ \ceil{t/2} \right ]$ let $Y_j$ be $1$ if $h_j$ is constant and $0$ otherwise.
	That is, $Y_j = \left [ a_{i,k-1} = \ldots = a_{i,1} = 0 \right ]$. We note that $Y_j$ is $1$ with
	probability $\frac{1}{p^{k-1}}$. Since $Y_0,\ldots,Y_{\ceil{t/2}-1}$ are independent we see that
	\begin{align*}
		\Pr \left (
			Y_0 + \ldots + Y_{\ceil{t/2}-1} > 0
		\right )
		& = 
		1 - \left ( 1 - \frac{1}{p^{k-1}} \right )^{\ceil{t/2}}		
		\\ &		
		\ge 
		1 - e^{-\frac{\ceil{t/2}}{p^{k-1}}}
		=
		1 - e^{-\Theta(n/p^k)}
	\end{align*}
	Since $p \le m$ we see that $e^{-\Theta(n/p^k)} \le e^{-\Theta(n/m^k)}$ furthermore $n/m^k \le 1$ by
	assumption and so $e^{-\Theta(n/m^k)} = 1 - \Theta \left ( \frac{n}{m^k} \right )$. This proves that
	at least one $h_i, j \in \left [ \ceil{t/2} \right ]$ is constant with probability
	$\Omega \left ( \frac{n}{m^k} \right )$. And if that is the case at least on bucket has size
	$\ge p/2 = \Omega(m)$.
	This proves the theorem under the assumption that $X \ge t/2$. Since $X \ge t/2$ happens with probability
	$1 - o(1)$ this finishes the proof.

\qed
\end{proof} 

Since it is well known that using $\mathcal{O}(\log n / \loglog n)$-independent hash function to distribute the balls will imply largest bucket size $\Omega \left( \log n / \log \log n \right )$ , \Cref{cor:full} provides the full understanding of the largest bucket size.
\begin{proof}(of \Cref{cor:full})
	Part (a) follows directly from \Cref{thm:loadmain}. Part (b) follows since $k > n^{1/k}$ implies $k > \log n / \log \log n$ and so we apply the $\Omega \left( \log n / \log \log n \right )$ bound from \cite{Schmidt:1995:CBA:217737.217746}.
\qed
\end{proof} 


\bibliographystyle{amsplain}
\bibliography{kind}

\appendix
\section{Appendix}

\subsection{Quicksort in Setting 2}\label{sec:settingtwo}

The analog to \Cref{lem:boundC4} that we need in order to prove that quicksort
in Setting 2 using a $4$-independent hash function runs in expected
$\Oh \left ( n \log n \right )$ time is proved below.

\begin{lemma}
Let $h: X \to (0,1)$ be a $4$-independent hash function and $A,B \subset X$
disjoint sets such that $\abs{A} \le \abs{B}$. Then
\[
	\E \left (
		\abs{
			\set{
				a \in A
				\mid
				h(a) < \min_{b \in B} h(b)
			}
		}
	\right )
	=
	O \left (
		1
	\right )
\]
\end{lemma}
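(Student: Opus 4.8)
The plan is to mimic the proof of \Cref{lem:boundC4}, with the discrete ``time'' index $i \in [n]$ replaced by the hash value itself, used as a threshold in $(0,1)$. First I would reduce to the case $\abs{A} = \abs{B}$: if $\abs{A} < \abs{B}$, replace $B$ by an arbitrary $\abs{A}$-element subset $B' \subset B$; then $\min_{b \in B'} h(b) \ge \min_{b \in B} h(b)$, so $\set{a \in A \mid h(a) < \min_{b \in B} h(b)} \subset \set{a \in A \mid h(a) < \min_{b \in B'} h(b)}$, and $h$ restricted to $A \cup B'$ is still $4$-independent, so it suffices to treat $(A,B')$. Write $m = \abs{A} = \abs{B}$ and $\beta = \min_{b \in B} h(b)$, so the quantity to bound is $N := \abs{\set{a \in A \mid h(a) < \beta}}$.

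Next I set up a dyadic decomposition of $(0,1)$ according to the scale of $\beta$. For each integer $\ell \ge 0$ let $C_\ell = \abs{\set{a \in A \mid h(a) < 2^{-\ell}}}$ and let $E_\ell$ be the event that $h(b) \ge 2^{-\ell-1}$ for every $b \in B$; note $E_\ell$ is exactly the event $\beta \ge 2^{-\ell-1}$. Since $\beta \in (0,1)$ almost surely, it lies in exactly one interval $[2^{-\ell-1},2^{-\ell})$, and when it does we have both $N \le C_\ell$ (as $h(a) < \beta < 2^{-\ell}$ for every counted $a$) and $E_\ell$. Hence $N \le \sum_{\ell \ge 0} C_\ell\,[E_\ell]$, so by Tonelli $\E(N) \le \sum_{\ell \ge 0} \E\!\left(C_\ell\,[E_\ell]\right)$. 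As in \Cref{eq:expectationbound}, with $\mu_\ell := \E(C_\ell) = m 2^{-\ell}$ one has the elementary inequality $C_\ell\,[E_\ell] \le (C_\ell - 2\mu_\ell)^+ + 2\mu_\ell\,[E_\ell]$, so $\E(C_\ell[E_\ell]) \le \E\!\left((C_\ell - 2\mu_\ell)^+\right) + 2\mu_\ell\Pr(E_\ell)$.

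For the scales with $\mu_\ell \ge 1$ (equivalently $2^\ell \le m$) I argue exactly as for \Cref{lem:boundC4}. Since $C_\ell = \sum_{a \in A}[h(a) < 2^{-\ell}]$ is a sum of $4$-independent $[0,1]$-variables, \Cref{fact:kthmoment} gives $\E\!\left((C_\ell - \mu_\ell)^4\right) = O(\mu_\ell + \mu_\ell^2) = O(\mu_\ell^2)$, and then, writing $\E\!\left((C_\ell - 2\mu_\ell)^+\right) = \sum_{r \ge 1}\Pr(C_\ell - \mu_\ell \ge \mu_\ell + r)$, a Markov bound on the fourth moment together with \Cref{fact:funkysum} yields $\E\!\left((C_\ell - 2\mu_\ell)^+\right) = O(1/\mu_\ell)$. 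Similarly, for $Z := \sum_{b \in B}[h(b) < 2^{-\ell-1}]$ (a sum of $4$-independent $\zo$-variables with $\E Z = \mu_\ell/2 \ge 1/2$), the event $E_\ell$ is $\{Z = 0\}$, so a fourth-moment Chebyshev bound via \Cref{fact:kthmoment} gives $\Pr(E_\ell) \le \E(Z - \E Z)^4/(\E Z)^4 = O(1/\mu_\ell^2)$, hence $2\mu_\ell\Pr(E_\ell) = O(1/\mu_\ell)$. So $\E(C_\ell[E_\ell]) = O(1/\mu_\ell) = O(2^\ell/m)$ for these $\ell$, and $\sum_{\ell\,:\,2^\ell \le m} O(2^\ell/m) = O(1)$. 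For the remaining scales $\mu_\ell < 1$ I use the trivial bound $\E(C_\ell[E_\ell]) \le \E(C_\ell) = \mu_\ell = m 2^{-\ell}$, and $\sum_{\ell\,:\,2^\ell > m} m 2^{-\ell} = O(1)$. Adding the two contributions gives $\E(N) = O(1)$.

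The main obstacle is purely the correct discretization: choosing the dyadic scales $2^{-\ell}$ (now \emph{decreasing} from $1$ rather than increasing as in \Cref{lem:boundC4}), verifying that ``$\beta$ lands in the $\ell$-th dyadic interval'' really implies both $N \le C_\ell$ and $E_\ell$, and checking that the two regime sums ($\mu_\ell \ge 1$ and $\mu_\ell < 1$) each converge to $O(1)$. Once the decomposition $N \le \sum_{\ell \ge 0} C_\ell[E_\ell]$ is in place, every remaining estimate is a verbatim reuse of the moment inequalities (\Cref{fact:kthmoment}, \Cref{fact:funkysum}) already employed in the proof of \Cref{lem:boundC4}, so no genuinely new difficulty arises.
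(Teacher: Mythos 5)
Your proposal is correct and takes essentially the same route as the paper's own proof: a dyadic decomposition over the scale of $\min_{b \in B} h(b)$, the split $C_\ell\,[E_\ell] \le (C_\ell - 2\mu_\ell)^+ + 2\mu_\ell\,[E_\ell]$, and fourth-moment bounds via \Cref{fact:kthmoment} and \Cref{fact:funkysum} summed geometrically over scales. The only cosmetic difference is at the smallest scales, where the paper adds an expected $+1$ for keys hashing below $1/\abs{A}$ while you use the trivial bound $\E(C_\ell[E_\ell]) \le \mu_\ell$ when $\mu_\ell < 1$; both yield $O(1)$.
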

\begin{proof}
Wlog assume that $|A|=|B|= n$. Let $Y$ be defined by
\[Y = \set{a \in A \mid h(a) < \min_{b \in B} h(b) }\text{.}\] If $a \in Y$ then either
$h(a) < \frac{1}{n}$ or there exists $k \in \mathbb{N}$ such that $h(a) \le 2^{-k+1}$
and $\min_{b \in B} h(b) \ge 2^{-k}$, where we can choose $k$ such that $2^{-k+1} > \frac{1}{n}$,
i.e. $2^k < 2n$.

Let $Y_k$ be the set of all keys $a \in A$
satisfying $h(a) \le 2^{-k+1}$ and let $E_k$ be the event that $\min_{b \in B} h(b) \ge 2^{-k}$. Also let $1_{E_k}$ denote the indicator variable defined as being
$1$ when event $E_k$ occurs and $0$ otherwise.
Since the expected number of keys in $A$ hashing below $\frac{1}{n}$ is $1$ we see that:
\[
	\E( \abs{Y})
	\le
	1 + \sum_{k = 1}^{\floor{\lg n} + 1} \E ( \abs{Y_k} 1_{E_k} )
\]
Now note that:
\begin{equation}\label{eq:expectationbound}
	\E( \abs{Y_k} 1_{E_k})
	\le
	\E (\abs{Y_k} - 2^{-k+2}n)^+
	+
	2^{-k+2} n \cdot  \E (1_{E_k})
\end{equation}
where $x^+$ is defined as $\max\set{x,0}$.

First we will bound $\E (\abs{Y_k} - 2^{-k+2}n)^+$. Note that for any $\ell \in \mathbb{N}$:
\begin{align}\label{eq:term1}
	\Pr\left ( (\abs{Y_k} - 2^{-k+2}n)^+ \ge \ell \right )
	&=
	\Pr \left ( \abs{Y_k} - \E(\abs{Y_k}) \ge 2^{-k+1}n + \ell \right )\notag\\
	&\le
	\frac{ \E \left ( \abs{Y_k} - E \abs{Y_k} \right )^4}{(2^{-k+1}n + \ell)^4}
\end{align}
Remember that we consider a $4$-independent hash function $h$. Next we wish to upper bound $\E \left ( \abs{Y_k} - \E (\abs{Y_k}) \right )^4$ (the numerator of (\ref{eq:term1})). Consider indicator variables $X_a$ for all $a \in A$ such that $X_a = 1$ if $a \in Y_k$ and $0$ otherwise. By the definition of $Y_k$ we have $|Y_k| = \sum_{a \in A} X_a$ and $\E(\sum_{a \in A} X_a) = \mathcal{O}(2^{-k+1})$.
\begin{align}
	\E \left ( \abs{Y_k} - \E (\abs{Y_k}) \right )^4
	&=
    \E \left( \sum_{a \in A} X_a - \E(X_a) \right)^4\notag\\
    &=
    \mathcal{O}\left(n \E(X_a - \E(X_a))^4 + n^2  \E( (X_a - \E(X_a))^2)^2 \right)\notag\\
    &=
    \mathcal{O}\bigl( \E \bigl(\sum_{a \in A} X_a\bigr)^2 \bigr)
    =  \mathcal{O}((2^{-k} n)^2) \label{eq:numbound}
\end{align}
Consider now the following fact, which we will use to bound a particular type of sum.
\begin{fact}\label[fact]{fact:funkysum}
Let $r,l \in \R$. It holds that \[ \sum_{l \geq 1} \frac{1}{(r+l)^4} \leq \frac{1}{r^3} \text{.}\]
\end{fact}
\begin{proof}
We have \[ \sum_{l \geq 1} \frac{1}{(r+l)^4} \leq  \int_0^{\infty} \frac{1}{(r+x)^4}\mathrm{d}x = \left[ - \frac{1}{3} \frac{1}{(r+x)^3} \right]_0^\infty \leq  \frac{1}{r^3} \text{.}\]
\qed
\end{proof}
By application of \Cref{fact:funkysum} and using our bound from (\ref{eq:numbound}) we can finish the upper bound on (\ref{eq:term1}):
\begin{align*}
	\E (\abs{Y_k} - 2^{-k+2}n)^+
	& =
	\sum_{\ell \ge 1}
	\Pr\left ( (\abs{Y_k} - 2^{-k+2}n)^+ \ge \ell \right )
	\\
	& =
	\mathcal{O} \left (
		(2^{-k} n)^2
		\sum_{\ell \ge 1} \frac{1}{(2^{-k+1}n + \ell)^4}
	\right ) \\
	& =
	\mathcal{O} \left (
		\frac{1}{2^{-k}n}
	\right )
	=
	\mathcal{O} \left (
		\frac{2^k}{n}
	\right )
\end{align*}

We only need to bound $2^{-k+2} n \cdot  \E( 1_{E_k})$ (the second term of (\ref{eq:expectationbound})) in order to
finish the proof.
For each $b \in B$ let $Z_b = 1$ if $h(b) \le 2^{-k}$ and $Z_b = 0$ otherwise. Then
$E_k$ implies that $\sum_{b \in B} Z_b = 0$. Let $Z = \sum_{b \in B} Z_b$. Then by an equivalent argument as used  for (\ref{eq:numbound}):
\[
	\E (1_{E_k})
	=
	\Pr \left (
		Z = 0
	\right )
	\le
	\Pr \left (
		\abs{Z - \E Z} \ge \E Z
	\right )
	\le
	\frac{E (Z - \E Z)^4}{(\E Z)^4}
	=
	\Oh \left ( \frac{1}{(\E Z)^2} \right )
\]
Since $\E(Z) = 2^{-k} n$ we see that
\[
	2^{-k+2} n \cdot  \E( 1_{E_k})
	=
	\mathcal{O} \left (
		\frac{1}{2^{-k}n}
	\right )
	=
	\mathcal{O} \left (
		\frac{2^k}{n}
	\right )
\]
To conclude, we insert our bounds on the two terms of (\ref{eq:expectationbound}), which completes the proof.
\begin{align*}
	\E (\abs{Y})
	&\le
	1
	+
	\sum_{k = 1}^{\floor{\lg n} + 1}
	\E (\abs{Y_k} - 2^{-k+2}n)^+
	+
	2^{-k+2} n \cdot \E (1_{E_k})\\
	&=
	1+
	\sum_{k = 1}^{\floor{\lg n} + 1}
	\mathcal{O} \left (
		\frac{2^k}{n}
	\right )
	=
	\mathcal{O}(1)
\end{align*}
\qed
\end{proof} 


\end{document}